\theoremstyle{definition}
\newtheorem{remark}{Remark}
\newtheorem{proposition}{Proposition}[section]
\newtheorem{lemma}[proposition]{Lemma}
\newtheorem{theorem}{Theorem}
\newtheorem*{theorem*}{Theorem}
\newcommand{\ZZ}{\mathbb{Z}}
\newcommand{\RR}{\mathbb{R}}
\newcommand{\CC}{\mathbb{C}}
\DeclareMathOperator{\ad}{ad}
\DeclareMathOperator{\Diff}{Diff}
\DeclareMathOperator{\PSL}{PSL}
\DeclareMathOperator{\vect}{vect}
\DeclareMathOperator{\Stab}{Stab}
\DeclareMathOperator{\erf}{erf}
\newcommand{\OO}{\mathcal O}
\newcommand{\g}{\mathfrak g}
\newcommand{\vir}{\mathfrak{vir}}
\newcommand{\del}{\partial}
\newcommand{\cy}{\color{cyan}}
\newcommand{\red}{\color{red}}
\begin{document}
\title[]{Towards bosonization of Virasoro coadjoint orbits }

\author[A.\ Alekseev]{Anton Alekseev$^1$}
\address{$^1$Section of Mathematics, University of Geneva, Rue du Conseil G\'en\'eral 7-9, 1211 Geneva, Switzerland}
\email{anton.alekseev@unige.ch}

\author[O.\ Chekeres]{Olga Chekeres$^2$}
\address{$^2$Department of Mathematics, University of Connecticut, 341 Mansfield Rd, Storrs, 06269 Connecticut, USA}
\email{olga.chekeres@uconn.edu}

\author[D.\ R.\ Youmans]{Donald R.\ Youmans$^{3,4}$}
\address{$^3$Albert Einstein Center for Fundamental Physics,
Institute for Theoretical Physics,
University of Bern, Sidlerstrasse 5, 3012 Bern, Switzerland}\textbf{}

\address{$^4$ Heidelberg STRUCTURES Excellence Cluster, Heidelberg University, Berliner Str.\ 47, 69120 Heidelberg, Germany}\textbf{}
\email{youmans@uni-heidelberg.de}


\begin{abstract}
    We show that Schwarzian theories associated to certain hyperbolic and parabolic Virasoro coadjoint orbits admit  bosonization, {\em i.e.} a global $S^1$-equivariant Darboux chart in which the corresponding path integral becomes Gaussian.
    In this chart, correlation functions of bilocals, time-ordered and out-of-time ordered, can be computed explicitly. 
    We conjecture that a similar global chart exists for the Teichm\"uller orbit. 
\end{abstract}

\dedicatory{In memory of Krzysztof Gawedzki, teacher and friend}

\maketitle

\tableofcontents

\section{Introduction}

Coadjoint orbits of the Virasoro group have experienced a surge of interest in recent years due to their independent appearance in two-dimensional gravity models \cite{Maldacena_Stanford_Yang,Mertens_Turiaci} and in condensed matter   SYK-models \cite{Maldacena_Stanfod}.

Associated to any Virasoro coadjoint orbit, there is a physical system whose configuration space of fields is given by the orbit itself.
The action functional of the model is given by the moment map of  rigid $S^1$-rotations on the orbit:
\begin{equation}
    \mu_{b_0}(f) = \int_{S^1} \left(b_0(f(x)) f'^2(x) + \frac{c}{12} S(f)\right) dx,
    \label{eq:action}
\end{equation}
where $f(x+2\pi)=f(x)+2\pi$ is a diffeomorphism of the circle, and $S(f)$ is the Schwarzian derivative:
\begin{equation*}
    S(f) = \frac{f'''}{f'} - \frac{3}{2} \left(\frac{f''}{f'} \right)^2.
\end{equation*}
In the action \eqref{eq:action}, the periodic function $b_0(x)$ characterizes the orbit. In this paper, we will consider Virasoro orbits with $b_0={\rm const}$. Even in this case, the action \eqref{eq:action} defines a family of rather complicated nonlinear theories.

Nevertheless, it was realized in \cite{Stanford_Witten} that  these theories are one-loop exact. 
In this article we go one step further and show that for a wide class of Virasoro coadjoint orbits the action \eqref{eq:action} actually defines a free theory. 
This is an instance of bosonization which unravels the hidden structure of a non-interacting theory.
Classical examples of bosonization were provided by Gawedzki-Kupiainen in the case of WZW models \cite{GK} (see also \cite{Alekseev_Shatashvili_pathintegral_quantization, GMOMS}).

Our work is inspired by some classical results in equivariant symplectic geometry. In more detail, consider a symplectic manifold $M$ with a Hamiltonian $S^1$-action and an isolated fixed point $p \in M$. Then, the equivariant Darboux theorem states that there is a neighborhood of $p$ and a coordinate system such that the symplectic form is constant, the $S^1$-action linear, and the moment map of the $S^1$-action quadratic. Furthermore, if $p$ is an extremum 
of the moment map, the size of the Darboux chart is controlled by the next fixed point of the $S^1$-action \cite{Karshon-Tolman}. If there are no other fixed points, this equivariant Darboux chart is global.

On a Virasoro coadjoint orbit with $b_0={\rm const}$ there is a unique fixed point of rigid rotations $r(x)=x+\theta$ which is given by $f(x)=x$ \cite{Dai-Pickrell}. 
For a large class of orbits which include certain hyperbolic and elliptic orbits as well as the Teichm\"uller orbit, this fixed point is a maximum of the moment map. 

In this paper, we show that hyperbolic Virasoro orbits with constant representative $b_0<0$ and the parabolic orbit with $b_0=0$ do admit a global Darboux chart. Hence, the corresponding Schwarzian theories admit bosonization. Furthermore, observables of the model \eqref{eq:action} are bilocals $\OO(x,y)$. In  Darboux coordinates, they take an exponential form, and their correlation functions are given by Gaussian integrals. In particular, correlation functions of two bilocals $\OO(x_1,x_2)\OO(x_3,x_4)$ can be computed for time-ordered $x_1 < x_2 < x_3 < x_4$ or out-of-time-ordered $x_1 < x_3 < x_2 < x_4$ configurations.
Comparison of these correlation functions is the standard probe for quantum chaos, see {\em e.g.}\ \cite{Sonner}. 
Such a comparison, however, is beyond the scope of this article.
Note that in the finite dimensional case correlation functions can be explicitly computed on coadjoint orbits of the unitary group $U(n)$ using Gelfand-Zeitlin Darboux coordinates (see \cite{S}).

We put forward the following conjecture:

\vskip 0.2cm

{\bf Conjecture:} Elliptic Virasoro orbits with $0<b_0<\frac{c}{24}$ and  the exceptional Teichm\"uller orbit ($b_0 = \frac{c}{24}$) admit global equivariant Darboux charts. 

\vskip 0.2cm

In the case of the Teichm\"uller orbit, we give its presentation as a codimension 2 submanifold in an (infinite dimensional) Darboux chart. We believe  that this presentation may give a good starting point for proving the conjecture.

The article is structured as follows: In Section \ref{sec:geometry_of_virasoro_coadjoint_orbits}, we recall some basic facts about the geometry of Virasoro coadjoint orbits. 
In Section \ref{sec:darboux_coordinates_for_hyperbolic_orbits}, we prove the main theorem of this article: 
\vskip 0.2cm

    {\bf Theorem:} Hyperbolic Virasoro coadjoint orbits $\OO_{b_0}$ with  $b_0 < 0$ and the parabolic coadjoint orbit $\OO_0$ admit  global equivariant Darboux charts.

\vskip 0.2cm
The proof is given by an explicit construction. 
In Section \ref{sec:problems_of_the_teichmueller_orbit}, we discuss the geometry of the Teichm\"uller orbit and put forward a conjecture on the existence of the global equivariant Darboux chart.
Section \ref{sec:partition_function_and_correlation_functions_of_bilocal_operators} is devoted to the computation of the partition function, and of correlation functions for one and two bilocals  (time-orderd and out-of-time ordered) of the Schwarzian theory associated to a hyperbolic coadjoint orbit. \bigskip

\noindent\textbf{Acknowledgements.}\
The authors would like to thank A.\ Altland, N.\ Dondi, E.\ Meinrenken, N.\ Delporte, S.\ Shatashvili, J.\ Sonner, and A.\ Thomsen for interesting and insightful discussions. 

Research of AA was supported in part by the grants 182767, 208235 and 200400, 
and by the NCCR SwissMAP of the Swiss National Science Foundation (SNSF).
The work of DRY was supported by the NCCR SwissMAP of the SNSF and in part by the Deutsche Forschungsgemeinschaft (DFG, German Research Foundation) under Germany's Excellence Strategy EXC 2181/1 - 390900948 (the Heidelberg STRUCTURES Excellence Cluster).

\section{Geometry of Virasoro coadjoint orbits}\label{sec:geometry_of_virasoro_coadjoint_orbits}

In this Section, we recall the definition and basic properties of coadjoint orbits of the central extension of the group of orientation preserving diffeomorphisms of the circle. Usually, these orbits are referred to as Virasoro coadjoint orbits.

Let $\Diff^{+}(S^1)$ be the  group of orientation preserving diffeomorphisms of the circle and $\widetilde\Diff^{+}(S^1)$ its universal cover.
Elements of $\widetilde\Diff^{+}(S^1)$ are functions $f \colon \RR \to \RR$ satisfying $f'(x) > 0$ and the quasi-periodicity condition 
$$
f(x + 2\pi) = f(x) + 2\pi.
$$

The Lie algebra $\vect(S^1)$ of $\Diff^{+}(S^1)$ is given by vector fields on the circle. 
It admits a universal central extension, the Virasoro algebra $\vir$.
As a vector space $\vir = \vect(S^1) \oplus \RR$.
To simplify notation, we denote the generator of the central $\mathbb{R}$ by $1$.
Elements of $\vir$ are pairs $(L(x)\del_x,k)$ where $L(x)$ is a periodic function and $k \in \RR$.
The Lie bracket is defined by formula
\begin{equation}
  [L_1+k_1,L_2+k_2] = [L_1,L_2] + \frac{1}{12} \int_{S^1} dx\ L_1'''(x)L_2(x).
  \label{eq:adj}
\end{equation}
In particular, 
if one introduces a basis $\ell_m = \frac{1}{2\pi}e^{imx}\del_x$ of (the complexification of) $\vect(S^1)$, then one recovers the standard representation of the Virasoro algebra
\begin{equation}
  2\pi i[\ell_m,\ell_n] = (m-n)\ell_{m+n} + \frac{1}{12} m^3 \delta_{m+n,0},
  \label{eq:virasoro_algebra}
\end{equation}
\begin{remark} 
  In the literature 
  it is common to choose the generator of the central $\RR$ 
  to be $c$, and to normalize the central term in \eqref{eq:virasoro_algebra} by $i c/24\pi$, see {\em e.g.}\ \cite{W1}. If one then considers the basis \mbox{$L_n = i e^{inx}\del_x$}, with $L_0$ shifted by $c/24$, 
  Equation \eqref{eq:virasoro_algebra} takes on the more familiar form
  \begin{equation*}
    [L_m, L_n] = (m-n) L_{m+n} + \frac{c}{12} (m^3-m) \delta_{m+n,0}.
  \end{equation*}
  In this case, $c$ is known as the central charge.
\end{remark}

Elements of the (smooth) dual $\vir^*$ can be identified with pairs $(b,-c)$ where $b = b(x)dx^2$ is a quadratic differential on the circle and $c \in \RR$. 
The pairing between $\vir$ and $\vir^*$ is defined by
\begin{equation}
  \left\langle b-c,L+k \right\rangle = \int_{S^1} b(x)L(x) dx - c k.
\end{equation}
From \eqref{eq:adj} one finds the infinitesimal coadjoint action of $\vect(S^1) \subset \vir$
\begin{equation}
  \ad^*_L(b - c) =  \left(2 L'(x) b(x) + L(x) b'(x) + \frac{c}{12}L'''(x) \right) dx^2,
  \label{eq:inf_coadj}
\end{equation}
which integrates to 
\begin{equation}
  Ad^*_{f^{-1}}(b - c) = b^f - c = \left( f'^2 b(f(x)) + \frac{c}{12} S(f)\right) dx^2 - c,
  \label{eq:coad}
\end{equation}
where $S(f)$ is the Schwarzian derivative of $f$
\begin{equation*}
  S(f) = \frac{f'''}{f'} - \frac32 \left( \frac{f''}{f'} \right)^2. 
\end{equation*}
Coadjoint orbits are hence of the form 
\begin{equation*}
  \OO_b = \{ b^f(x)dx^2 \mid f \in \widetilde\Diff^+(S^1) \} \cong \widetilde\Diff^+(S^1) / \widetilde\Stab(b) \cong \Diff^+(S^1) / \Stab(b).
\end{equation*}
Here $\widetilde\Stab(b) \subset \widetilde\Diff^{+}(S^1)$ is the stabilizer subgroup of $b$ under the action of $\widetilde\Diff^{+}(S^1)$ and $\Stab(b)$ is the stabilizer subgroup of $b$ under the action of $\Diff^+(S^1)$. Since the action \eqref{eq:coad} of $\widetilde\Diff^{+}(S^1)$ on quadratic differentials factors through the action of $\Diff^+(S^1)$, the quotient spaces are equal to each other.

In what follows we will restrict our consideration to the case of $b$ being a constant quadratic differential, {\em i.e.}\ $b(x) = b_0$.
It is known that for generic values of $b_0$, the stabilizer subgroup $\Stab(b)$ is isomorphic to $S^1$ while for the special values 
\begin{equation}
  b_0 = \frac{c k^2}{24}, 
\end{equation}
it is isomorphic to a $k$-fold cover of $\PSL(2,\RR)$.
Orbits with $b_0<0$ are called hyperbolic, with $b_0>0$ and $b_0 \neq ck^2/24$ elliptic, with $b_0=ck^2/24$ exceptional, and the orbit with $b_0=0$ is parabolic
(see \cite{Dai-Pickrell, Kirillov, LP, Segal, W1} for details of the classification).

Recall that coadjoint orbits of Lie groups carry canonical Kirillov-Kostant-Souriau (KKS) symplectic structures \cite{KKS}. In more detail, let $G$ be a Lie group with Lie algebra $\g$, and let $\OO_{\xi} \subset \g^*$ be a coadjoint orbit through a point $\xi \in \g^*$. 
The canonical $G$-invariant symplectic form on $\OO_\xi$ at the point $\xi$ is given by 
\begin{equation}
    \varpi_{\xi}(x,y) = \langle\xi, [x,y] \rangle, \qquad x,y \in \g.
\end{equation}
Here $x,y$ in $\varpi_\xi(x,y)$ stand for fundamental vector fields on $\OO_\xi$.

Since $\OO_{\xi}$ is diffeomorphic to $G / \Stab(\xi)$, the 2-form $\varpi_\xi$  can be pulled back to $G$ under the projection $\pi \colon G \to G / \Stab(\xi) \cong \OO_{\xi}$:
\begin{equation}
    \omega = \pi^*\varpi \in \Omega^2(G).
\end{equation}
At the identity, $e \in G$, one has 
\begin{equation*}
    \omega_e(x,y) = \langle \xi, [x,y] \rangle, \qquad x,y \in T_eG = \g.
\end{equation*}
To find the expression of $\omega$ at any point $g$, one uses left translation on the group: for $x_g, y_g \in T_gG$ two left-invariant vector fields one has
\begin{equation*}
    \omega_g(x_g,y_g) =(L_{g^{-1}})^*\omega_e(x_g,y_g) = \langle \xi, [(L_{g^{-1}})_*x_g, (L_{g^{-1}})_* y_g] \rangle.
\end{equation*}
This allows us to write $\omega_g$ very concisely in terms of the Maurer-Cartan form $Y
\in \Omega^1(G,\g)$:
\begin{equation}
    \omega_g = \frac12\langle \xi, [Y(g),Y(g)] \rangle.
    \label{eq:KKS_general}
\end{equation}
\begin{remark}
 Recall that the Maurer-Cartan form $Y$ satisfies the following identities: $\langle Y(g), x_g\rangle=x$ and
  \begin{subequations}   \label{eq:properties_MC}
  \begin{align}
  &\delta Y(g) + \frac12 [Y(g),Y(g)] = 0,\\
  &\delta \xi^g = ad^*(Y(g))\xi^g.
\end{align}
\end{subequations}
Here and henceforth $\delta$ stands for the de Rham differential on the group and $\xi^g = Ad^*_g \xi$.
This allows one to write $\omega_g$ as:
\begin{equation}
    \omega_g= \frac12 \langle \delta \xi^g, Y(g) \rangle
\end{equation}
\end{remark}

Returning to diffeomorphisms of the circle, consider the Maurer-Cartan form on the Virasoro group, {\em c.f.}\ \cite{Aratyn_Nissimov_Pacheva_1990,Barnich_Gonzalez_Saldago_Rebolledo})
%
\begin{equation}
  \hat Y = \frac{\delta f}{f'} + \delta k + \frac{1}{24}\int_{0}^{2\pi}dx\ \frac{\delta f}{f'} \left( \frac{f''}{f'} \right)'.
\end{equation} 
Here $k$ denotes the coordinate on the central $\mathbb{R}$, and $\delta k$ is the corresponding one-form.

For $\hat Y$, equations \eqref{eq:properties_MC} take the form
\begin{subequations}
  \begin{align}
  &\delta \hat Y + \frac12 [\hat Y,\hat Y] = 0,\\
  &\delta (b^f - c) = ad^*(\hat Y) b^f,
\end{align}
\label{eq:MC_Y}
\end{subequations}
where $[\cdot, \cdot]$ is the Lie bracket of $\vir$, {\em c.f.}\ \eqref{eq:adj}, and the coadjoint action is given by \eqref{eq:inf_coadj}.
The KKS symplectic form at a point $b - c \in \vir^*$ is then given by
\begin{equation}
  \begin{split}
  \omega_{b^f - c} &= \frac12 \left\langle b^f - c, [\hat Y, \hat Y] \right\rangle  \\
  &= \frac12 \int_{S^1} \left( \delta b^f \wedge \frac{\delta f}{f'} \right) dx \\
  &= \int_{S^1}\left(  b_0 \delta f' \wedge \delta f + \frac{c}{24} \delta \log f' \wedge \delta \log(f')'  \right)dx.
  \end{split}
  \label{eq:KKS_symplectic_form}
\end{equation}
A direct proof of the identities \eqref{eq:MC_Y} and \eqref{eq:KKS_symplectic_form} is given in Appendix \ref{app:proofs_schwarzian}.

Coadjoint orbits $\OO_{b_0} \cong \Diff^{+}(S^1)/\Stab(b_0)$ carry a Hamiltonian action of $\Diff^{+}(S^1)$. 
In what follows, we will be interested in the action of the subgroup $S^1 \subset \Diff^{+}(S^1)$ given by rigid rotations of the circle.
Expressed in terms of $f \in \widetilde\Diff^{+}(S^1)$, this action takes the form
\begin{equation}
  \RR \ni \theta\colon f(x) \mapsto f(x+\theta), 
  \label{eq:circle_action}
\end{equation}
which descends to an $S^1$-action on the orbit.
Its moment map is given by
\begin{equation}
  \mu_{S^1} = \int_{S^1} b^f(x) dx.
\end{equation}
It is known (see for instance \cite{Dai-Pickrell}) that $\Stab(b_0dx^2) = S^1$ and that $b_0$ is the unique fixed point of the circle action \eqref{eq:circle_action}
on $\OO_{b_0}$.

\section{Darboux coordinates on Virasoro orbits}\label{sec:darboux_coordinates_for_hyperbolic_orbits}
The aim of this Section is to introduce global Darboux coordinates on hyperbolic Virasoro coadjoint orbits which admit a constant representative $b_0 dx^2$ and on the parabolic orbit which corresponds to the vanishing quadratic differential.

\subsection{Darboux charts in finite dimensions}
In this Section, we recall the classical equivariant Darboux theorem in the finite dimensional situation. 

Consider a symplectic manifold $(M^{2n},\omega)$ endowed with a Hamiltonian $S^1$-action. Let $\mu: M \to \mathbb{R}$ be the corresponding moment map,
and let $p \in M$ be an isolated fixed point of the $S^1$-action.
The equivariant Darboux theorem \cite{Weinstein} states that there is a neighborhood of $p$  which is equivariantly symplectomorphic to a neighborhood of $0 \in \CC^n$ endowed with the standard symplectic form 
\begin{equation}  \label{eq:omega_0}
\omega_0 = \frac{i}{2}\sum_j dz_j \wedge d\bar z_j,
\end{equation}
moment map 
\begin{equation}  \label{eq:moment_0}
    \mu = \mu(p) + \frac12 \sum_j k_j \lvert z_j \rvert^2,
\end{equation}
and $S^1$-action 
\begin{equation}  \label{eq:action_0}
    z_j \mapsto e^{i k_j \theta} z_j.
\end{equation}
Here $0 \neq k_j \in \mathbb{Z}$ are the weights of the $S^1$-action at $p$.

In general, there are no estimates on the size of an equivariant Darboux chart.
However, there is a special situation where a rather precise estimate is available. The following theorem is a reformulation of the result by Karshon-Tolman \cite{Karshon-Tolman}:

\begin{theorem}  \label{thm:KT}
Assume that all the weights $k_j$ are positive and that the $S^1$-action is free on $\mu^{-1}(a,b)$, where $a = \mu(p)$. Then, there is an $S^1$-equivariant symplectomorphism from $\mu^{-1}([a, b))$ to the ball $\{(z_1, \dots, z_n); \sum k_j|z_j|^2 < 2(b-a)\}$ equipped with the $S^1$-action \eqref{eq:action_0}, symplectic structure \eqref{eq:omega_0} and moment map \eqref{eq:moment_0}.
A similar result holds when all the weights $k_j$ are negative.
\end{theorem}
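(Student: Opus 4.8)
The plan is to upgrade the \emph{local} equivariant Darboux chart near $p$ to a \emph{global} one on $\mu^{-1}([a,b))$ by flowing it outward along the moment map and correcting with a Moser argument. Since all weights $k_j$ are positive, $p$ is an isolated local minimum of $\mu$ and $\mu(p)=a$ is the bottom of the moment map image. First I would invoke the equivariant Darboux theorem (Weinstein, quoted above) to obtain an $S^1$-equivariant symplectomorphism $\phi$ from a small neighbourhood of $p$ onto a ball $\{\sum_j k_j|z_j|^2<2\epsilon\}\subset\CC^n$ intertwining $\omega$ with \eqref{eq:omega_0}, $\mu$ with \eqref{eq:moment_0}, and the action with \eqref{eq:action_0}. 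This already yields the desired identification on $\mu^{-1}([a,a+\epsilon))$; the content of the theorem is that nothing obstructs extending it up to the level $b$.

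Second, I would exploit freeness. On $\mu^{-1}(a,b)$ the $S^1$-action is free, so $\mu\colon\mu^{-1}(a,b)\to(a,b)$ is an $S^1$-equivariant fibration whose fibres $\mu^{-1}(t)$ are principal $S^1$-bundles over the reduced spaces $\mu^{-1}(t)/S^1$. Choosing an $S^1$-invariant vector field $X$ on $\mu^{-1}(a,b)$ with $d\mu(X)\equiv 1$, the flow of $X$ trivialises this fibration and identifies all reduced spaces with a single one; performing the same construction for the model ball identifies its reduced spaces with the weighted projective space $\mathbb{P}(k_1,\dots,k_n)$. Because $\phi$ matches the two pictures on the overlap $t\in(a,a+\epsilon)$, these trivialisations glue into an $S^1$-equivariant, $\mu$-preserving diffeomorphism $\Phi\colon\mu^{-1}([a,b))\to\{\sum_j k_j|z_j|^2<2(b-a)\}$ that agrees with $\phi$ near $p$.

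Third, I would compare the two symplectic forms. Both $\omega$ and $\Phi^*\omega_0$ are $S^1$-invariant and share $\mu$ as moment map (since $\Phi$ intertwines the actions and preserves $\mu$), so the linear interpolation $\omega_t=\omega+t(\Phi^*\omega_0-\omega)$ keeps $\mu$ as moment map for every $t$. By the Duistermaat--Heckman theorem the reduced form on $\mu^{-1}(t)/S^1$ varies affinely in $t$ with slope the Duistermaat--Heckman class; since $\Phi$ respects $\mu$ and agrees with $\phi$ near $p$, the two forms induce the same reduced forms at every level and are therefore cohomologous. I would then run a Moser isotopy: choosing a primitive $\Phi^*\omega_0-\omega=-d\beta$ with $\beta$ being $S^1$-invariant and annihilating the generating vector field $\xi_M$ of the action, the equation $\iota_{V_t}\omega_t=\beta$ defines a time-dependent vector field whose flow preserves $\mu$ (because $\beta(\xi_M)=0$) and is equivariant (because $\beta$ is invariant). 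Integrating it and composing with $\Phi$ produces the sought global equivariant symplectomorphism.

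The main obstacle is the Moser step in this open, equivariant, moment-map-constrained setting: one must produce the primitive $\beta$ simultaneously $S^1$-basic, vanishing to the appropriate order at $p$ so that the isotopy fixes the fixed point, and decaying as $t\to b^-$ so that $V_t$ is complete and its flow does not leave $\mu^{-1}([a,b))$. The freeness hypothesis on $\mu^{-1}(a,b)$ is precisely what makes the reduced spaces well behaved and renders the Duistermaat--Heckman comparison available, while the positivity of all weights $k_j$ guarantees that $p$ is the unique minimum, so that the fibration $\mu\colon\mu^{-1}(a,b)\to(a,b)$ has no interior critical points and the outward flow of Step 2 runs unobstructed all the way to $b$. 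The case of all-negative weights reduces to the above upon replacing $\mu$ by $-\mu$ and each $z_j$ by $\bar z_j$, which converts the maximum at $p$ into a minimum.
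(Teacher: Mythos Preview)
The paper does not prove this theorem at all: it is stated as a reformulation of a result of Karshon--Tolman \cite{Karshon-Tolman} and simply cited, then used as motivation for the infinite-dimensional constructions that follow. So there is no ``paper's own proof'' to compare against.

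That said, your sketch is a reasonable outline of how results of this type are established, and it is in the spirit of the Karshon--Tolman argument. The three-step structure (local equivariant Darboux, extension by flowing along the moment map using freeness, Moser correction) is correct in outline. A few remarks on the details you flagged: in Step~2 you should be careful that the reduced spaces are smooth only under the freeness assumption, and even then the model side gives a weighted projective space which is an orbifold when the $k_j$ are not all equal to $1$; this does not obstruct the argument but requires working in the orbifold category or, equivalently, working upstairs equivariantly throughout. In Step~3 the key technical point is indeed producing a primitive $\beta$ that is $S^1$-basic and for which the Moser vector field is complete on the open set $\mu^{-1}([a,b))$; Karshon--Tolman handle this by working level-set by level-set and invoking an equivariant coisotropic embedding/neighbourhood theorem rather than a single global Moser isotopy, which sidesteps the completeness issue you raise. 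Your final paragraph correctly identifies this as the crux, though ``decaying as $t\to b^-$'' is not quite the right condition---what one needs is that the flow preserves each moment level, which your condition $\beta(\xi_M)=0$ together with $S^1$-invariance of $\beta$ does give, and that the flow is tangent to the level sets, which are compact, so completeness follows.
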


Theorem \ref{thm:KT} implies that if all the weights are of the same sign and if $p$ is the only fixed point, then the corresponding equivariant Darboux chart is global (that is, it covers all of $M$).



As an example, consider the unit two-sphere $S^2$ embedded in $\RR^3$  with its South pole $S$ removed and endowed with the standard symplectic form  \mbox{$\omega = dz \wedge d\varphi$}.
Here $z$ denotes the $z$-coordinate on $\RR^3$ and $\varphi$ is the polar angle. 
Consider the $S^1$-action which rotates the sphere around the $z$-axis. 
This action is Hamiltonian with  moment map 
$\mu \colon (x,y,z) \mapsto z$.
The circle action admits a unique fixed point, the North pole, sitting at $z = +1$. 
Around the 
fixed point, the Darboux coordinates are given by
\begin{equation*}
  \begin{split}
    x &= \sqrt{2(1-z)}\sin(\varphi), \\
    y &= \sqrt{2(1-z)}\cos(\varphi),
  \end{split}
\end{equation*}
such that 
\begin{equation*}
  \begin{split}
    \omega &= dx \wedge dy \\
    \mu &= 1 - \frac12 (x^2 + y^2).
  \end{split}
\end{equation*}
This chart is indeed global on $S^2 \backslash \{S\}$.\bigskip

The crucial observation is that for a large class of Virasoro coadjoint orbits the circle action \eqref{eq:circle_action} admits a unique fixed point (see {\em e.g.}\  \cite{Dai-Pickrell}). 
Moreover, there is a class of orbits for which all the weights at this fixed point have the same sign.
This suggests that these orbits may admit a global Darboux chart. Note that the
Karshon-Tolman theorem cannot be directly applied to this situation since 
Virasoro coadjoint orbits are infinite dimensional.

\subsection{Hyperbolic orbits with \texorpdfstring{$b_0<0$}{b0<0}}
In this Section, we construct global equivariant Darboux charts on Virasoro coadjoint orbits passing through constant quadratic differentials $b_0dx^2$ for $b_0 < 0$ and $c > 0$. 

We will need the following notation: for $\alpha >0$, let
\begin{align*}
\tilde B_{\alpha} &= \{ F \colon \mathbb{R} \to \mathbb{R}_+; F'(x) >0, F(x+2\pi) = e^{2\pi \alpha} F(x)\}.
\intertext{and}
\tilde A_{\alpha} &= \{ u \colon \mathbb{R} \to \mathbb{R}; u(x+2\pi) = u(x) + 2\pi \alpha\}
\end{align*}
Furthermore, we introduce two maps 
$r_\alpha$ and $p_\alpha$ and their composition \mbox{$q_\alpha = p_\alpha \circ r_\alpha$}.
The map
\[
r_\alpha \colon \widetilde\Diff^+(S^1) \to \tilde B_{\alpha} 
\]
is given by the formula
$$
r_{\alpha} \colon f \mapsto F(x) = (g_\alpha \circ f)(x) = \frac{e^{\alpha f(x)}}{\alpha},
$$
where $g_\alpha(y)=e^{\alpha y}/{\alpha}$.
The second map,
\[
p_{\alpha} \colon \tilde B_{\alpha} \to \tilde A_{\alpha}
\]
is defined by the formula
$$
p_\alpha \colon F \mapsto u(x) = \log(F'(x)).
$$
It is easy to see that the quasi-periodic properties of $f$ imply the quasi-periodic properties of $F$, and that those of $F$ imply those of $u$. 

\begin{proposition}
  The maps $r_\alpha$ and $p_\alpha$ are bijections.
\end{proposition}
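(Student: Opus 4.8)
The plan is to establish the two bijectivity claims separately, treating $p_\alpha$ first since it is the simpler one, then $r_\alpha$. For $p_\alpha \colon \tilde B_\alpha \to \tilde A_\alpha$, $F \mapsto u = \log F'$, I would construct an explicit inverse. Given $u \in \tilde A_\alpha$, set $F(x) = \int_0^x e^{u(t)}\,dt + C$ for a constant $C$ to be fixed. One checks directly that $F'(x) = e^{u(x)} > 0$, that $\log F' = u$, and that the quasi-periodicity $u(x+2\pi) = u(x) + 2\pi\alpha$ forces $e^{u(x+2\pi)} = e^{2\pi\alpha} e^{u(x)}$, hence $F'(x+2\pi) = e^{2\pi\alpha} F'(x)$, which upon integration gives $F(x+2\pi) = e^{2\pi\alpha}F(x) + (\text{const})$; choosing $C$ appropriately makes this constant vanish, so $F \in \tilde B_\alpha$, and in fact the constant is pinned down by requiring $F(x+2\pi)=e^{2\pi\alpha}F(x)$ exactly (one solves a single linear equation for $C$). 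Conversely, $F \mapsto \log F' \mapsto \int e^{\log F'} = \int F' = F$ up to the additive constant, which is again fixed by the scaling quasi-periodicity; and $u \mapsto F \mapsto \log F' = u$. So $p_\alpha$ is a bijection, with the one subtlety being the bookkeeping of the integration constant.

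For $r_\alpha \colon \widetilde\Diff^+(S^1) \to \tilde B_\alpha$, $f \mapsto F = g_\alpha \circ f$ with $g_\alpha(y) = e^{\alpha y}/\alpha$, I would again produce the inverse explicitly: $g_\alpha$ is a smooth strictly increasing bijection $\mathbb{R} \to \mathbb{R}_+$ with inverse $g_\alpha^{-1}(w) = \frac{1}{\alpha}\log(\alpha w)$, so the candidate inverse of $r_\alpha$ is $F \mapsto f = g_\alpha^{-1}\circ F = \frac{1}{\alpha}\log(\alpha F)$. The content is that this lands in $\widetilde\Diff^+(S^1)$: first, $F > 0$ guarantees $\alpha F > 0$ so the logarithm is defined and $f$ is smooth; second, $f' = F'/(\alpha F) > 0$ since $F' > 0$ and $F > 0$; third, the quasi-periodicity $F(x+2\pi) = e^{2\pi\alpha}F(x)$ gives $f(x+2\pi) = \frac{1}{\alpha}\log(\alpha e^{2\pi\alpha}F(x)) = \frac{1}{\alpha}(2\pi\alpha + \log(\alpha F(x))) = f(x) + 2\pi$, exactly the quasi-periodicity required of an element of $\widetilde\Diff^+(S^1)$. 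In the other direction one checks the defining conditions of $\tilde B_\alpha$ for $F = e^{\alpha f}/\alpha$ when $f \in \widetilde\Diff^+(S^1)$ — positivity is clear, $F' = e^{\alpha f} f' > 0$, and $F(x+2\pi) = e^{\alpha(f(x)+2\pi)}/\alpha = e^{2\pi\alpha}F(x)$ — most of which the text has already noted. The two compositions $g_\alpha^{-1}\circ g_\alpha = \mathrm{id}$ and $g_\alpha \circ g_\alpha^{-1} = \mathrm{id}$ then finish the argument.

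The main obstacle, such as it is, is not any deep point but rather verifying carefully that the target spaces are exactly as claimed — in particular that the quasi-periodicity exponents match up ($2\pi$ for $f$, multiplicative $e^{2\pi\alpha}$ for $F$, additive $2\pi\alpha$ for $u$) and that smoothness and the open-ended positivity conditions ($F > 0$, $F' > 0$, $f' > 0$) are preserved in both directions. Since every map in sight is given by an elementary closed-form expression ($\exp$, $\log$, integration), injectivity and surjectivity both follow immediately once the explicit two-sided inverses are shown to be well-defined between the correct spaces; there is no genuine analytic difficulty, and the proposition reduces to the chain of verifications sketched above.
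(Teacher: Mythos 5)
Your proposal is correct and follows essentially the same route as the paper: exhibiting explicit two-sided inverses, with $r_\alpha^{-1}(F)=\frac{1}{\alpha}\log(\alpha F)$ and the same quasi-periodicity checks. Your inverse for $p_\alpha$, namely $F(x)=\int_0^x e^{u(t)}\,dt + C$ with $C$ fixed by the linear equation $(e^{2\pi\alpha}-1)C=\int_0^{2\pi}e^{u(t)}\,dt$, is exactly the paper's closed-form expression rewritten, so the difference is only bookkeeping.
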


\begin{proof}
 The inverse of the map $r_\alpha$ is given by
 \[
 r_\alpha^{-1}\colon F \mapsto f(x) = \frac{1}{\alpha} \, \log(\alpha F(x)).
\]
 Furthermore, we have $f'(x) = F'(x)/(\alpha F(x))>0$ and
 \[
 f(x+2\pi) = \frac{1}{\alpha} \, \log(\alpha F(x+2\pi)) =
 \frac{1}{\alpha} \, \log\left(\alpha e^{2\pi\alpha}F(x)\right) =
 f(x) + 2\pi.
\]
 The inverse of $p_\alpha$ is given by
\[
 p_\alpha^{-1} \colon u \mapsto F(x) = \frac{1}{e^{2\pi \alpha}-1}\left( e^{2\pi \alpha} \int_0^x e^{u(s)} ds + \int_x^{2\pi} e^{u(s)} ds\right).
\]
Note that $F'(x)=e^{u(x)}>0$ and $F(2\pi) = e^{2\pi \alpha}F(0)$, as required.
\end{proof}

The composition map $q_\alpha$ is also a bijection given by the formula
\[
q_\alpha \colon f \mapsto u(x) = \alpha f(x) + \log(f'(x)).
\]
Its inverse $q_\alpha^{-1}$ is given by an explicit (albeit a bit cumbersome) expression:
\[
q_\alpha^{-1}: u \mapsto  f(x) = \frac{1}{\alpha} \left( \log\left( e^{2\pi \alpha} \int_0^x e^{u(s)}ds + \int_x^{2\pi} e^{u(s)}ds \right) + \log\left( \frac{\alpha}{e^{2\pi \alpha} - 1} \right) \right).
\]

In what follows, we will be interested in the quotient space
\[
\widetilde\Diff^+(S^1)/\mathbb{R} \cong \Diff^+(S^1)/S^1.
\]
Here the action of $\mathbb{R}$ is given by $f(x) \mapsto f(x) + \phi$.
It is convenient to introduce a special notation for  quotient spaces
\[
B_\alpha = \tilde B_{\alpha} / \RR 
\]
under the action $F(x) \mapsto e^{2\pi \alpha \phi} F(x)$, and for quotient spaces
\[
A_\alpha = \tilde A_{\alpha} / \RR 
\]
under the action $u(x) \mapsto u(x) + \alpha \phi$.

It is easy to see that the maps $r_\alpha, p_\alpha$ and $q_\alpha = p_\alpha \circ r_\alpha$ descend to bijections between these quotient spaces. In particular the bijection
$$
q_\alpha: \Diff^+(S^1)/S^1  \to A_\alpha
$$
lands in the affine space $A_\alpha$
modelled on the space of $2\pi$-periodic functions with zero mean:
$$
u(x) = \alpha x + \sum_{n \neq 0} u_n e^{inx},
$$
where $u_{-n}=\bar{u}_n$.

One of the main results of this article is the following theorem:
\begin{theorem}\label{thm:main_theorem}
The map $q_\alpha$ is an equivariant symplectic isomorphism between
the hyperbolic coadjoint orbit with $b_0 = - c\alpha^2/24$ and the affine space
$A_\alpha$ with symplectic form
\begin{equation}  \label{eq:omega_alpha}
\omega = \frac{c}{24} \int_{S^1} (\delta u \wedge \delta u') dx = 
\frac{i\pi c}{6} \sum_{n>0} n \, \delta u_n \wedge \delta \bar{u}_n,
\end{equation}
action of $\widetilde\Diff^+{S^1}$
\begin{equation}  \label{eq:affine_alpha}
h \colon u \mapsto u^h(x) = u(h(x)) + \log(h'(x)),
\end{equation}
and moment map
\begin{equation} \label{eq:moment_alpha}
\mu(u) = \frac{c}{12}\left( u''(x) - \frac{1}{2} \, u'(x)^2\right).
\end{equation}
\end{theorem}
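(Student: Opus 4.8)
The plan is to take bijectivity of $q_\alpha$ on the quotient spaces as given and check the three remaining compatibilities — with the $\widetilde\Diff^+(S^1)$-action, with the symplectic form, and with the moment map — directly from the formula $q_\alpha(f)(x)=\alpha f(x)+\log f'(x)$. It is convenient to work first on $\widetilde\Diff^+(S^1)$ and $\tilde A_\alpha$ and then descend: the KKS form \eqref{eq:KKS_symplectic_form} is pulled back from $\OO_{b_0}$ along $\pi$ and is basic for the $\RR$-action $f\mapsto f+\phi$ (its contraction with the constant generating field is $-b_0\,\delta\!\int_{S^1}f'\,dx=-b_0\,\delta(2\pi)=0$), while $\tfrac{c}{24}\int_{S^1}\delta u\wedge\delta u'\,dx$ and $\tfrac{c}{12}(u''-\tfrac12 u'^2)$ do not involve the zero Fourier mode of $u$ and are basic for $u\mapsto u+\alpha\phi$; hence any identity between these objects proved on $\widetilde\Diff^+(S^1)$ descends to $\OO_{b_0}\cong\widetilde\Diff^+(S^1)/\RR\cong A_\alpha$. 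I would also record at the start that $\delta f$, $f'$, $\log f'$ and all their $x$-derivatives are genuinely $2\pi$-periodic (the quasi-periodicity of $f$ shifts only by the constant $2\pi$), so integration by parts over $S^1$ produces no boundary terms.

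Equivariance is immediate: if $u=q_\alpha(f)$ then, for $h\in\widetilde\Diff^+(S^1)$,
\[
u^h(x)=u(h(x))+\log h'(x)=\alpha f(h(x))+\log\bigl(f'(h(x))\,h'(x)\bigr)=\alpha(f\circ h)(x)+\log(f\circ h)'(x)=q_\alpha(f\circ h)(x),
\]
so $q_\alpha$ intertwines the $\widetilde\Diff^+(S^1)$-action descending from right multiplication on $\widetilde\Diff^+(S^1)/\RR\cong\OO_{b_0}$ (i.e.\ the coadjoint action, up to the usual left/right convention) with the action \eqref{eq:affine_alpha}; in particular it is $S^1$-equivariant. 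For the moment map, recall that the $\Diff^+(S^1)$-moment map on a coadjoint orbit is the tautological inclusion $\OO_{b_0}\hookrightarrow\vect(S^1)^*$, so it suffices to check $\tfrac{c}{12}(u''-\tfrac12 u'^2)=b^f$ for $u=q_\alpha(f)$. Differentiating $u=\alpha f+\log f'$ twice and collecting terms gives $u''-\tfrac12 u'^2=S(f)-\tfrac{\alpha^2}{2}f'^2$, whence $\tfrac{c}{12}(u''-\tfrac12 u'^2)=\tfrac{c}{12}S(f)-\tfrac{c\alpha^2}{24}f'^2=\tfrac{c}{12}S(f)+b_0 f'^2=b^f$ by \eqref{eq:coad} and $b_0=-c\alpha^2/24$; contracting with the generator $\del_x$ of rigid rotations recovers the $S^1$-moment map $\int_{S^1}b^f\,dx=\int_{S^1}\mu(u)\,dx$.

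The main step is matching the symplectic forms. With $u=\alpha f+\log f'$ one has $\delta u=\alpha\,\delta f+\delta\log f'$ and $\delta u'=\alpha\,\delta f'+\delta(\log f')'$, so
\[
\delta u\wedge\delta u'=\alpha^2\,\delta f\wedge\delta f'+\alpha\bigl(\delta f\wedge\delta(\log f')'+\delta\log f'\wedge\delta f'\bigr)+\delta\log f'\wedge\delta(\log f')'.
\]
Integrating over $S^1$ and multiplying by $\tfrac{c}{24}$: the first term gives $-\tfrac{c\alpha^2}{24}\int_{S^1}\delta f'\wedge\delta f\,dx=\int_{S^1}b_0\,\delta f'\wedge\delta f\,dx$, the first summand of \eqref{eq:KKS_symplectic_form}; the last term is precisely its second summand; and the cross term vanishes, since $\delta\log f'\wedge\delta f'=\tfrac1{f'}\,\delta f'\wedge\delta f'=0$ and, after integrating by parts on $S^1$, $\int_{S^1}\delta f\wedge\delta(\log f')'\,dx=-\int_{S^1}\delta f'\wedge\delta\log f'\,dx=\int_{S^1}\delta\log f'\wedge\delta f'\,dx=0$ for the same reason. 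This proves $q_\alpha^*\bigl(\tfrac{c}{24}\int_{S^1}\delta u\wedge\delta u'\,dx\bigr)=\omega_{b^f-c}$. Finally, substituting the Fourier series $u=\alpha x+\sum_{n\neq0}u_n e^{inx}$ with $u_{-n}=\bar u_n$ and performing the $x$-integral diagonalizes the form, reproducing \eqref{eq:omega_alpha}, which is manifestly non-degenerate.

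The only point demanding real care is the bookkeeping in the symplectic-form computation — tracking signs in the wedge products of the group one-forms $\delta f$, $\delta\log f'$, and justifying the integration by parts, which rests on the periodicity noted in the first paragraph. Conceptually there is no obstacle: the whole content of the theorem lies in the change of variables $q_\alpha$ itself, and once it is written down each compatibility collapses to an elementary identity. (Throughout, all manipulations are read in the formal, Fréchet sense appropriate to the infinite-dimensional spaces $\OO_{b_0}$ and $A_\alpha$; non-degeneracy of the resulting symplectic form is transparent from its diagonal Fourier presentation.)
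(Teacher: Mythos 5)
Your proof is correct and hits the same three checkpoints as the paper (equivariance, moment map, symplectic form), but the computational route differs in a small yet genuine way. The paper exploits the factorization $q_\alpha=p_\alpha\circ r_\alpha$ and runs both the moment-map and symplectic-form computations through the intermediate variable $F=e^{\alpha f}/\alpha$: the moment map becomes $\frac{c}{12}S(F)$ via the Schwarzian composition law (using $S(g_\alpha)=-\alpha^2/2$ and $g_\alpha''/g_\alpha'=\alpha$), and the symplectic form is rewritten via \eqref{eq:omega_in_terms_of_F}, the only extra contribution being the boundary term $\frac{g_\alpha''}{g_\alpha'}\frac{\delta f\wedge\delta f'}{f'}\big|_0^{2\pi}$, which vanishes by periodicity. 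You instead work directly with $u=\alpha f+\log f'$: the pointwise identity $u''-\tfrac12 u'^2=S(f)-\tfrac{\alpha^2}{2}f'^2$ yields the moment map, and in the expansion of $\delta u\wedge\delta u'$ the cross term is killed by $\delta\log f'\wedge\delta f'=0$ together with a single integration by parts over $S^1$ --- which is precisely the paper's vanishing boundary term in different clothing. Both arguments rest on the same pulled-back KKS form \eqref{eq:KKS_symplectic_form} and on the bijectivity of $q_\alpha$ established beforehand; your explicit check that all objects are basic for the $\mathbb{R}$-action (so everything descends to $\OO_{b_0}\cong A_\alpha$) makes the quotient step a bit more explicit than in the paper, where descent is only asserted. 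What the paper's detour through $F$ buys is reusability: the same formula \eqref{eq:omega_in_terms_of_F} is recycled verbatim for the Teichm\"uller orbit with $g(y)=2\tan(y/2)$, whereas your direct expansion is more elementary and self-contained for the hyperbolic and parabolic cases.
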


\begin{proof}
First, we show that the map $q_\alpha$ intertwines the actions of $\widetilde\Diff^+(S^1)$ on itself (by right multiplications) and on the space of quasi-periodic functions $u(x)$:
%
\begin{alignat*}{2}
f \circ h &\mapsto&~ &\alpha f(h(x)) + \log((f \circ h)'(x)) \\
&=&~  &\alpha f(h(x)) + \log(f'(h(x))) + \log(h'(x)) \\
&=&~  &u(h(x)) + \log(h'(x)).
\end{alignat*}
These actions descend to actions on the quotient spaces $\widetilde\Diff^+(S^1)/\mathbb{R}\cong \Diff^+(S^1)/S^1$ and $A_\alpha$.

Next, we transfer the moment map and the symplectic form from the Virasoro coadjoint orbit to the space $B_\alpha$. 
 Note that the function $g_\alpha$ has the following properties:  $g'_\alpha(x) > 0$, $S(g_\alpha) = -\alpha^2/2 =12 b_0/c$, and
$g''_\alpha/g'_\alpha=\alpha$ is constant. 

By the properties of Schwarzian derivative, we have
\[
\mu = b_0 f'(x)^2 + \frac{c}{12} S(f) = \frac{c}{12}\left( S(g_\alpha) \circ f + S(f) \right) = \frac{c}{12} \, S(F).
\]
Furthermore,
%
\begin{equation}
    \begin{split}
            \omega &=   \frac{c}{24} \int_{S^1} \left( \delta\log F' \wedge \delta\log(F')'  \right)dx - \left.\frac{g''_\alpha(f(x))}{g'_\alpha(f(x))} \frac{\delta f \wedge \delta f'}{f'}\right\rvert_0^{2\pi} \\
    &= \frac{c}{24} \int_{S^1} \left( \delta\log F' \wedge \delta\log(F')'  \right)dx. 
    \end{split}
    \label{eq:omega_in_terms_of_F}
\end{equation}

Finally, we use the map $p_\alpha$ to transfer the moment maps and the symplectic form to the affine space $A_\alpha$:
\begin{equation}
\mu = \frac{c}{12}\left( u''(x) - \frac{1}{2} \, u'(x)^2\right), \hskip 0.3cm
\omega = \frac{c}{24} \int_{S^1} (\delta u \wedge \delta u') dx.
\label{eq:mu_and_omega_hyp}
\end{equation}
This completes the proof.
\end{proof}

Theorem \ref{thm:main_theorem} provides global $S^1$-equivariant Darboux charts on all hyperbolic Virasoro coadjoint orbits $\OO_{b_0}$ passing through constant quadratic differentials $b_0 dx^2$ with $b_0<0$. The $S^1$-action in this Darboux chart is given by
$$
\theta: u(x) \mapsto u(x+\theta).
$$
In terms of Fourier components $u_n$, we have $u_n \mapsto e^{in\theta} u_n$, and the weights of the action at the fixed point $u_n=0$ are given by $n=1, 2, \dots$. The moment map of the $S^1$-action is given by
\begin{equation}
\mu_{S^1} = \int_{S^1} \mu(u) dx = - \frac{c}{24} \int_{S^1} u'(x)^2 dx =
-\frac{c}{12}\left( 2\pi \alpha^2 + 4\pi \sum_{n>0} n^2 |u_n|^2\right).
\label{eq:mu_S1_hyp}
\end{equation}

Furthermore, the group $\Diff^+(S^1)$ acts on the symplectic affine space $A_\alpha$  by affine Hamiltonian transformations \eqref{eq:affine_alpha} with moment map \eqref{eq:moment_alpha} given by quadratic expressions in $u(x)$.

\subsection{The parabolic orbit \texorpdfstring{$b_0=0$}{b0=0}}
In this Section, we consider the special case of the parabolic Virasoro orbit which corresponds to the vanishing quadratic differential.

We define a map
\[
q_0 \colon \widetilde\Diff^+(S^1) \to \tilde A_0 = \{ u: \mathbb{R} \to \mathbb{R}; u(x+2\pi)=u(x)\}
\]
given by the formula
\[
q_0 \colon f \mapsto u(x)=\log(f'(x)).
\]
It descends to a map (that we again denote by $q_0$)
\[
q_0 \colon \widetilde\Diff^+(S^1)/\mathbb{R} \to A_0 = \{ u: \mathbb{R} \to \mathbb{R}; u(x+2\pi)=u(x)\}/\mathbb{R}.
\]
Here the the $\mathbb{R}$-actions are as follows: $f(x) \mapsto f(x) + \phi, u(x) \mapsto u(x) + \psi$.

\begin{proposition}
  The map $q_0: \widetilde\Diff^+(S^1)/\mathbb{R} \to A_0$ is a bijection.
\end{proposition}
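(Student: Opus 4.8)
The plan is to exhibit an explicit two-sided inverse of $q_0$, exactly in the spirit of the maps $r_\alpha^{-1}$ and $p_\alpha^{-1}$ constructed above. The one conceptual point to keep in mind is that $q_0$ is \emph{not} surjective onto $\tilde A_0$ before passing to quotients: if $f\in\widetilde\Diff^+(S^1)$ and $u=\log f'$, then differentiating $f(x+2\pi)=f(x)+2\pi$ shows that $u$ is $2\pi$-periodic, and integrating the relation $f'=e^{u}$ over one period forces the constraint $\int_0^{2\pi}e^{u(s)}\,ds=2\pi$. A generic $u\in\tilde A_0$ does not satisfy this, but each $\RR$-orbit $\{u+\psi\}_{\psi\in\RR}$ in $\tilde A_0$ contains exactly one representative that does, namely the shift by $\psi=-\log\bigl(\tfrac1{2\pi}\int_0^{2\pi}e^{u(s)}\,ds\bigr)$. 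This is precisely the role played by quotienting $\tilde A_0$ by $\RR$.

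Concretely, for surjectivity, given $u\in\tilde A_0$ set $f(x)=2\pi\bigl(\int_0^x e^{u(s)}\,ds\bigr)\big/\bigl(\int_0^{2\pi}e^{u(s)}\,ds\bigr)$. One checks directly that $f'(x)=2\pi e^{u(x)}/\int_0^{2\pi}e^{u(s)}\,ds>0$, that $f(x+2\pi)=f(x)+2\pi$ by $2\pi$-periodicity of $u$ (so $f\in\widetilde\Diff^+(S^1)$), and that $\log f'(x)=u(x)+\mathrm{const}$, hence $q_0([f])=[u]$. Since $f$ is a ratio of integrals of $e^{u}$, replacing $u$ by $u+\psi$ multiplies numerator and denominator by the same factor $e^{\psi}$ and leaves $f$ unchanged, so $[u]\mapsto[f]$ is well defined on $A_0$; this is the candidate inverse. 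For injectivity, suppose $q_0([f_1])=q_0([f_2])$, i.e.\ $\log f_1'=\log f_2'+\psi$ for a constant $\psi$; then $f_1'=e^{\psi}f_2'$, so $f_1=e^{\psi}f_2+C$ for constants $\psi,C$. Imposing $f_i(x+2\pi)=f_i(x)+2\pi$ for $i=1,2$ gives $2\pi=2\pi e^{\psi}$, hence $\psi=0$ and $f_1=f_2+C$, i.e.\ $[f_1]=[f_2]$ in $\widetilde\Diff^+(S^1)/\RR$. Together these two points show that the map above is a genuine inverse of $q_0$, so $q_0$ is a bijection.

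There is no substantial obstacle here; the only points requiring care are keeping track of the two distinct $\RR$-actions (translation of $f$ on the source, translation of $u$ on the target) and verifying that the candidate inverse descends consistently to the quotients. The genuinely informative observation is that $q_0$ surjects onto — and is invertible on — all of $A_0$ only after passing to the quotient, because the period normalization $\int_0^{2\pi}e^{u(s)}\,ds=2\pi$ is achieved by a \emph{unique} constant shift of $u$; this is what makes the statement hold at the level of quotients rather than of the original spaces.
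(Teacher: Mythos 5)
Your proposal is correct and follows essentially the same route as the paper: the paper also proves the statement by exhibiting the explicit inverse $u \mapsto f(x) = f(0) + \frac{2\pi}{\int_0^{2\pi} e^{u(s)}\,ds}\int_0^x e^{u(t)}\,dt$ and noting that it is insensitive to shifts $u \mapsto u + \psi$ and determines $f$ only up to the choice of $f(0)$. Your additional checks (the normalization $\int_0^{2\pi} e^{u}\,ds = 2\pi$ picked out by a unique constant shift, and the explicit injectivity argument forcing $\psi = 0$ from quasi-periodicity) merely spell out details the paper leaves implicit.
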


\begin{proof}
  The inverse map is given by the following formula:
  \[
  q_0^{-1}: u \mapsto f(x) = f(0) + \frac{2\pi}{\int_0^{2\pi} e^{u(s)} ds} \int_0^x e^{u(t)} dt.
  \]
  Note $q_0^{-1}$ is a well defined map on $A_0$ since $f(x)$ doesn't change under shifts $u(x) \mapsto u(x) + \psi$. It defines a class of diffeomorphisms $f(x)$ modulo transformations $f(x) \mapsto f(x) + \phi$ which amount to the choice of $f(0)$.
\end{proof}

Similar to Theorem \ref{thm:main_theorem}, we obtain the following result:
\begin{theorem}  \label{thm:parabolic}
The map $q_0$ is an equivariant symplectomorphism between the parabolic coadjoint orbit corresponding to $b_0=0$ and the vector space $A_0$ with the symplectic form, the action of $\widetilde\Diff^+(S^1)$ and the moment map given by equations \eqref{eq:omega_alpha},\eqref{eq:affine_alpha} and \eqref{eq:moment_alpha}.
\end{theorem}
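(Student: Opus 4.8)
The plan is to mirror the proof of Theorem \ref{thm:main_theorem}, treating $b_0=0$ as the degenerate limit $\alpha\to 0$ of the hyperbolic construction. First I would verify that $q_0$ intertwines the group actions. The computation is identical to the one in the proof of Theorem \ref{thm:main_theorem} with the linear term $\alpha f(x)$ dropped: for $h\in\widetilde\Diff^+(S^1)$ acting by right multiplication $f\mapsto f\circ h$, one has $\log((f\circ h)'(x)) = \log(f'(h(x))) + \log(h'(x)) = u(h(x)) + \log(h'(x))$, which is exactly \eqref{eq:affine_alpha}. This descends to the quotients $\widetilde\Diff^+(S^1)/\RR$ and $A_0$, and the $S^1\subset\Diff^+(S^1)$ of rigid rotations acts by $u(x)\mapsto u(x+\theta)$, with the unique fixed point $u\equiv 0$ (corresponding to $f(x)=x$, i.e.\ to $b_0\,dx^2$ itself) and all weights positive, so the chart is genuinely global.

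Next I would transfer the moment map and symplectic form from the orbit $\OO_0$ to $A_0$. Since $b_0=0$, the moment map \eqref{eq:action} reduces to $\mu = \tfrac{c}{12}S(f)$, and the Schwarzian of $f$ expressed through $u=\log f'$ is the standard identity $S(f) = u'' - \tfrac12 (u')^2$, giving \eqref{eq:moment_alpha} directly — here there is no need for the auxiliary function $g_\alpha$, which is what makes the parabolic case cleaner. For the symplectic form, I would start from \eqref{eq:KKS_symplectic_form} with $b_0=0$, namely $\omega = \tfrac{c}{24}\int_{S^1} \delta\log f' \wedge \delta(\log f')'\,dx$, and substitute $u = \log f'$ to obtain $\omega = \tfrac{c}{24}\int_{S^1}\delta u\wedge\delta u'\,dx$, which is \eqref{eq:omega_alpha}; in Fourier modes with $u = \sum_{n\neq 0} u_n e^{inx}$ (note that for $\alpha=0$ the affine space $A_0$ is an honest vector space of mean-zero periodic functions) this becomes $\tfrac{i\pi c}{6}\sum_{n>0} n\,\delta u_n\wedge\delta\bar u_n$, manifestly a constant (Darboux) symplectic form with $u_n\mapsto e^{in\theta}u_n$ under rotations.

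The one point requiring a little care — and the closest thing to an obstacle — is checking that the boundary term analogous to the one in \eqref{eq:omega_in_terms_of_F} either vanishes or is simply absent here. In the hyperbolic case the conjugation by $g_\alpha$ produced a term $\tfrac{g''_\alpha(f)}{g'_\alpha(f)}\tfrac{\delta f\wedge\delta f'}{f'}\big|_0^{2\pi}$ which vanished by quasi-periodicity; in the parabolic case we go directly from $f$ to $u=\log f'$ with no intermediate map, so I would just confirm that pulling back \eqref{eq:KKS_symplectic_form} along $u = \log f'$ produces \eqref{eq:omega_alpha} on the nose, using that $\delta\log f' = \delta u$ and that both sides are well defined on the quotient by $f\mapsto f+\phi$ (on which $u$ is invariant). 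Finally, I would note that $q_0$ is a bijection by the previous Proposition, so being an equivariant diffeomorphism under which $\omega$, the $\Diff^+(S^1)$-action and $\mu$ all match, it is the desired equivariant symplectomorphism; since $u\equiv 0$ is the only fixed point of $S^1$ and all weights are positive, this realizes $\OO_0$ as a global equivariant Darboux chart, completing the proof.
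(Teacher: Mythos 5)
Your proposal is correct and follows essentially the same route as the paper: verify the intertwining of the right $\widetilde\Diff^+(S^1)$-action, use $S(f)=u''-\tfrac12(u')^2$ with $u=\log f'$ for the moment map, and set $b_0=0$ in the KKS form \eqref{eq:KKS_symplectic_form} to get \eqref{eq:omega_alpha} directly (with no intermediate $g_\alpha$ and hence no boundary term). The additional remarks on Fourier modes, weights, and the bijection from the preceding proposition are consistent with the paper but not needed beyond what its proof records.
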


\begin{proof}
  For the action, we compute
  \[
  u^h(x) = \log((f\circ h)'(x)) = \log(f'(h(x))) + \log(h'(x)) = u(h(x)) + \log(h'(x)).
  \]
  
  The moment map is given by
  \[
  \mu = \frac{c}{12} S(f) = \frac{c}{12}\left( u''(x) - \frac{1}{2} u'(x)^2\right) .
  \]
  
  Finally, for the symplectic form we obtain,
  \[
  \omega = \frac{c}{24} \int_{S^1} (\delta \log(f') \wedge \delta(\log(f'))') dx =
  \frac{c}{24} \int_{S^1} (\delta u \wedge \delta u') dx.
  \]
\end{proof}

Theorem \ref{thm:parabolic} provides a global $S^1$-equivariant Darboux chart on the coadjoint orbit passing through the vanishing quadratic differential. Again, we actually obtain a stronger result: the action of the diffeomorphism group $\Diff^+(S^1)$ on $A_0$ is by  affine Hamiltonian transformations with moment map given by a quadratic polynomial in $u(x)$.

\section{The Teichm\"uller orbit}\label{sec:problems_of_the_teichmueller_orbit}  

Recall that generic elliptic orbits corresponding to $b_0>0$ and exceptional orbits corresponding to $b_0=ck^2/24$ each possess a unique fixed point of the $S^1$-action of rigid rotations $x \mapsto x + \theta$. Therefore, one can conjecture that they also admit global equivariant Darboux charts:

\vskip 0.2cm

{\bf Conjecture:} Virasoro coadjoint orbits corresponding to $b_0>0$ admit global $S^1$-equivariant Darboux charts.

\vskip 0.2cm

To the best of our knowledge, this conjecture remains out of reach. However, techniques of the last Section can be applied to the case of exceptional orbits, and they give insight into their geometry. In this Section, we focus our attention on the Techm\"uller orbit which corresponds to $k=1$. 
By abuse of notation, we will denote for any $f(x) \in \widetilde\Diff^+(S^1)$ its class $[f(x)] \in \Diff^+(S^1)$ also by $f(x)$. \bigskip

Recall that the Teichm\"uller orbit is the unique Virasoro coadjoint orbit whose stabiliser is isomorphic to 
the group ${\rm PSL}(2, \mathbb{R})$:
\[
\OO_{\rm Teich} \cong {\rm Diff}^+(S^1)/{\rm PSL}(2, \mathbb{R}).
\]
The corresponding action is given by M\"obius transformations
\[
F(x) \mapsto \frac{a F(x) +b}{cF(x) +d},
\]
where
\[
F(x) = 2 \tan\left( \frac{f(x)}{2} \right).
\]

Following the strategy of the previous Section, we define
\[
v(x) = \log(F'(x)) = - \log(\cos^2(f(x)/2)) + \log(f'(x)).
\]
Under the action of diffeomorphisms, $v(x)$ transforms 
according to \eqref{eq:affine_alpha}. Indeed,
\[
v^h(x) = \log(F(h(x))') = \log(F'(h(x)) + \log(h'(x)) = v(h(x)) + \log(h'(x)).
\]
However, $v(x)$ has a singularity at a point $y$ 
where $f(y)=\pi$. 
In more detail, it belongs to the space
\[
C =\{ v\colon \mathbb{R} \to \mathbb{R}; v(x+2 \pi) = v(x), 
v(x) = - \log(\sin^2((x-y)/2)) + u(x)\}, 
\]
where $y \in \mathbb{R}$ is some point on the real axis, and $u(x)$ is smooth. The singularity of $v(x)$ at $x=y$ is of the form $v(x) \sim - 2\log(|x-y|)$. However, it is more convenient to use the function $\sin^2((x-y)/2)$ since it is $2\pi$-periodic and has the same behavior at zero. 

The space $C$ is a vector bundle over the circle with projection map
\begin{equation}
v(x) \mapsto y ~ {\rm mod} ~ 2 \pi \mathbb{Z},
\label{eq:v}
\end{equation}
and with fibers $C_y$ isomorphic to spaces of smooth periodic functions (hence, fibers are independent of $y$). It is easy to check that the action \eqref{eq:affine_alpha} is well defined on $C$.

We define a subbundle $D\subset C$ as follows:
\[
D=\left\{ v \in C;\; u'(y)=0, \int_{S^1} \frac{e^{u(x)} - e^{u(y)}}{\sin^2\left((x-y)/2\right)} \, dx=0\right\}.
\]
Note that for a smooth function $u(x)$ with $u'(y)=0$ the expression under the integral is smooth at $x=y$, and the integral is well defined. The fiber of $D_y$ is singled out  by the two equations. While the equation $u'(y)=0$ is linear, the integral condition is highly non-linear, and therefore the fiber $D_y$ is not a vector space. 

It will be convenient to introduce one more space as follows: consider configurations of two points on the circle
\[
{\rm Conf}_2(S_1) = \{ (y,z) \in S^1 \times S^1; y \neq z\}.
\]
It has a natural projection to the first point: $(y,z) \mapsto y$. Denote by $E$ the pull-back of the bundle $D$ under this projection. That is, for all $z \neq y$ the fiber $E_{y,z} = D_y$. Definitions of the spaces $D$ and $E$ are justified by the following result:
\begin{theorem}  \label{thm:f_teich}
For $f \in \Diff^+(S^1)$, the map 
\begin{equation}
  u \colon f \mapsto u(x) = \log(\sin^2((x-y)/2)) - \log(\cos^2(f(x)/2)) + \log(f'(x)),
  \label{eq:u}
\end{equation}
where $f(y)=\pi$, takes values in $D$. Furthermore, the map $f \mapsto (u(x), y, z)$ with
$f(z)=0$ defines a bijection $\Diff^+(S^1) \to E$.
\end{theorem}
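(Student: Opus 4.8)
The plan is to verify the two claims in turn: first that the map \eqref{eq:u} indeed lands in the subbundle $D \subset C$, and then that appending the data of the two marked points $y$ (where $f(y) = \pi$) and $z$ (where $f(z) = 0$) upgrades this to a bijection onto $E$. I would begin with the membership claim. Starting from $v(x) = \log(F'(x))$ with $F = 2\tan(f/2)$, a direct computation gives $v(x) = -\log(\cos^2(f(x)/2)) + \log(f'(x))$, and since $f$ is a diffeomorphism with $f(y) = \pi$, near $x = y$ we have $\cos^2(f(x)/2) \sim \tfrac14 f'(y)^2 (x-y)^2$, so $v(x) \sim -2\log|x-y|$, confirming $v \in C$ with the stated leading singularity and with $u(x) = \log(\sin^2((x-y)/2)) + v(x)$ smooth and $2\pi$-periodic. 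The content is then to check the two defining equations of $D$.

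For the condition $u'(y) = 0$: I would Taylor-expand $u$ around $x = y$. Writing $f(x) = \pi + f'(y)(x-y) + \tfrac12 f''(y)(x-y)^2 + \cdots$, one expands $\log(\sin^2((x-y)/2)) = 2\log|x-y| + O((x-y)^2)$ (the linear term vanishes by evenness) and $-\log(\cos^2(f(x)/2)) = -2\log\bigl|\sin((f(x)-\pi)/2)\bigr| + \text{const}$; matching the $2\log|x-y|$ singularities and comparing the next order shows the would-be linear term in $u$ is governed by $f''(y)/f'(y)$ minus the corresponding term from the sine expansion, and these cancel — or more cleanly, $u'(y) = 0$ follows because both $\log(\sin^2((x-y)/2))$ and $-\log(\cos^2(f(x)/2))$ are (up to the additive constant) even functions of $x - y$ to the relevant order, so $\log(f'(x))$, being smooth, forces $u'(y) = \bigl(\log f'\bigr)'(y) + (\text{odd-part corrections})$; a short calculation pins this down. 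For the integral condition, I note that $e^{u(x)} = \sin^2((x-y)/2)\,F'(x)$, so
\[
\int_{S^1} \frac{e^{u(x)} - e^{u(y)}}{\sin^2((x-y)/2)}\,dx = \int_{S^1} \bigl(F'(x) - e^{u(y)}\,\csc^2((x-y)/2)\bigr)\,dx.
\]
Now $\int_{S^1} F'(x)\,dx$ should be interpreted as a principal value because $F = 2\tan(f/2)$ blows up at $x = y$; the principal value of $\int F'\,dx$ over a full period of $\tan$ vanishes, and $e^{u(y)} = \lim_{x\to y}\sin^2((x-y)/2) F'(x)$ is exactly the residue-type coefficient that makes the $\csc^2$ counterterm cancel the divergence, with the principal value of $\int \csc^2$ also being zero. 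So the integral vanishes and $u \in D_y$. This principal-value bookkeeping is the step I expect to be the main technical obstacle, since one must be careful that the integrand $(e^{u(x)} - e^{u(y)})/\sin^2((x-y)/2)$ is genuinely smooth (using $u'(y) = 0$) so that no regularization is actually needed on the left-hand side, while the manipulation into $F'$ does require it.

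For the bijectivity onto $E$, I would construct the inverse explicitly, mirroring the proofs of the earlier propositions. Given $(u, y, z) \in E$, set $F(x) = e^{u(y)} \int_?^x e^{u(t)}/\sin^2((t-y)/2)\,dt$ suitably, or rather reconstruct $F$ from $F' = e^{v} = e^{u}/\sin^2((x-y)/2)$ by integration; the integral condition on $D_y$ is precisely what guarantees that this $F$ has the correct quasi-periodicity / single-valuedness as a function valued in $\RR\mathbb P^1$ (i.e. that $F$ increases by exactly one full period of $\tan$ around the circle, so that $f = 2\arctan(F/2)$ is a genuine element of $\Diff^+(S^1)$), and the normalization $f(z) = 0$, i.e. $F(z) = 0$, fixes the additive/multiplicative constant in the reconstruction of $F$. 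One then checks $f(y) = \pi$ automatically (that is where $F = \infty$), that the resulting $f$ is smooth and strictly monotone (monotonicity from $F' > 0$ away from $y$, smoothness across $y$ from $u'(y) = 0$), and that the two maps are mutually inverse. I would also remark that the condition $u'(y) = 0$ on $D$ is exactly the regularity needed for $f$ to be $C^\infty$ rather than merely $C^1$ at the point where $f = \pi$, which is why $D$ — and not all of $C$ — is the right target; and that allowing $z$ to range over $\mathrm{Conf}_2(S^1)$ rather than fixing it is what makes $f \mapsto f$ (with no leftover $\RR$-quotient) a bijection, since we are now working with $\Diff^+(S^1)$ itself and the marked point $z$ records the choice $f(z) = 0$ that was previously modded out.
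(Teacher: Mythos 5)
Your route is the paper's: Taylor expansion of $u$ at $x=y$ for smoothness and $u'(y)=0$, the integral constraint as the integral of an exact derivative built from $2\tan(f/2)$ plus a cotangent counterterm, and the explicit inverse $f(x)=2\arctan\bigl(\tfrac12\int_z^x e^{u(s)}\csc^2((s-y)/2)\,ds\bigr)$ with the integral constraint accounting for $f(x+2\pi)=f(x)+2\pi$ and $z$ fixing $f(z)=0$. However, your justification of the integral identity contains a genuine error: the two principal values you invoke do not vanish — they both diverge. Near $x=y$ one has $F(x)=2\tan(f(x)/2)=-\tfrac{4}{f'(y)}\tfrac{1}{x-y}+2\tfrac{f''(y)}{f'(y)}+O(x-y)$, so by periodicity of $F$, $\int_{|x-y|>\varepsilon}F'\,dx=F(y-\varepsilon)-F(y+\varepsilon)\sim \tfrac{8}{f'(y)\varepsilon}$, and likewise $e^{u(y)}\int_{|x-y|>\varepsilon}\csc^2((x-y)/2)\,dx=\tfrac{4}{f'(y)}\cot(\varepsilon/2)\sim\tfrac{8}{f'(y)\varepsilon}$. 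What is true, and what the paper proves, is that the difference tends to zero: the function $\Phi(x)=2\tan(f(x)/2)+\tfrac{2}{f'(y)}\cot((x-y)/2)$ extends to a smooth $2\pi$-periodic function (the poles cancel and the one-sided expansions share the constant term $2f''(y)/f'(y)$), and since $e^{u(y)}=1/f'(y)$ the integrand of the constraint is exactly $\Phi'(x)$, whence $\int_{S^1}\Phi'=0$. Note that cancelling the $1/\varepsilon$ terms alone (your ``residue-type coefficient'') is not enough to conclude the limit is zero; you also need the finite parts to match, i.e.\ the continuity of $\Phi$ across $y$, which is precisely the content of the paper's Laurent expansion of $\Phi$. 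So the conclusion is correct, but the step you yourself single out as the main obstacle needs the paper's auxiliary-function argument (or an equivalent matching of finite parts), not the vanishing of separate principal values.

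Two smaller points. Your ``more cleanly, by evenness'' alternative for $u'(y)=0$ is misleading: $-\log\cos^2(f(x)/2)$ is not even in $x-y$; its linear term $-\tfrac{f''(y)}{f'(y)}(x-y)$ cancels against the linear term of $\log f'(x)$, which is what your primary expansion (and the paper's) actually uses, so keep that version. The bijectivity part of your proposal matches the paper's inverse formula and correctly assigns the roles of the constraints ($u'(y)=0$ for regularity of $f$ at $y$, the integral condition for quasi-periodicity, $z$ for the normalization), and is fine as a sketch.
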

\begin{remark}
 The change of variable $v(x)$ in \eqref{eq:v} has been previously used in \cite{Altland} for the analysis of correlation functions of the SYK model. 
 Due to the singularity of $v(x)$, a cut-off is needed to regularize the integral constraint.
 In \eqref{eq:u}, the singularity is subtracted from the beginning.
\end{remark}
\begin{proof}
  The function $u(x)$ admits an expansion at $x=y$. Its first terms read:
  %
\begin{align*}
    u(x) &=  \log\left(\frac{(x-y)^2}{4} \right) - \log\left( \frac{f'(y)^2 (x-y)^2}{4}\right) - \frac{f''(y)}{f'(y)} \, (x-y) \\
  &+  \log(f'(y)) + \frac{f''(y)}{f'(y)} \, (x-y) + O((x-y)^2) \\
  &=  - \log(f'(y)) + O((x-y)^2).
\end{align*}
  That is, the function $u(x)$ is $2\pi$-periodic and smooth at $x=y$. We also observe that  $u(y)=-\log(f'(y))$ and $u'(y)=0$.
  
  Next, consider the function
  \[
  \Phi(x) = 2\tan\left(\frac{f(x)}{2}\right) + \frac{2}{f'(y)} \, {\rm cotan}\left(\frac{x-y}{2}\right).
  \] 
  At $x=y$, it admits a Laurent series expansion whose first terms are given by
  \[
  \Phi(x) = - \frac{4}{f'(y)} \, \frac{1}{x-y} + 2 \frac{f''(y)}{f'(y)} + \frac{4}{f'(y)} \, \frac{1}{x-y} + O(x-y) = 2 \frac{f''(y)}{f'(y)} + O(x-y).
  \]
  Hence, $\Phi(x)$ is smooth and $2\pi$-periodic. This implies
  \begin{align*}
  0 &= \int_{S^1} \Phi'(x) dx = \int_{S^1} \left( \frac{f'(x)}{\cos^2(f(x)/2)} - \frac{e^{u(y)}}{\sin^2((x-y)/2)}\right)\, dx \\
  &=
  \int_{S^1} \frac{e^{u(x)} - e^{u(y)}}{\sin^2((x-y)/2)} \, dx,
  \end{align*}
  and we conclude that $u(x)$ indeed belongs to $D$.
  
  The inverse map $E \to \Diff^+(S^1)$ is given by the formula
  \[
  (u(x), y, z) \mapsto f(x) = 2 \arctan\left( \frac{1}{2} \int_z^x \frac{e^{u(s)}}{\sin^2((s-y)/2)} \, ds \right). 
  \]
  It is easy to check that if $(u(x), y, z) \in E$, then this expression defines an element $f\in  \Diff^+(S^1)$ with $f(z)=0$ and $f(y)=\pi$. This completes the proof. 
\end{proof}

Theorem \ref{thm:f_teich} gives rise to the following description of the Teichm\"uller orbit:
\begin{proposition}
\begin{equation*}
    \OO_{\rm Teich} \cong 
    \left\{ u\colon \mathbb{R} \to \mathbb{R}; u(x+2\pi) = u(x), u(\pi) =u'(\pi)=0, \int_{S^1} \frac{e^{u(x)}-1}{\cos^2(x/2)} \, dx =0\right\}.
\end{equation*}
\end{proposition}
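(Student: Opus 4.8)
The plan is to obtain the description directly from Theorem \ref{thm:f_teich} after killing the residual $\PSL(2,\RR)$-symmetry by a global slice. Recall from the excerpt that $\OO_{\rm Teich}\cong\Diff^+(S^1)/\PSL(2,\RR)$, the quotient being by the action in which $h\in\PSL(2,\RR)$ sends $f$ to $h\circ f$; here $h$ is regarded as an element of $\Diff^+(S^1)$ via the orientation-preserving identification $\theta\mapsto 2\tan(\theta/2)$ of the target circle with $\RP^1$, so that $2\tan((h\circ f)(x)/2)$ is the M\"obius transform of $2\tan(f(x)/2)$. This action is free, since $h\circ f=f$ forces $h=\mathrm{id}$.

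First I would show that
\[
S \;=\; \{\, f\in\Diff^+(S^1)\;:\; f(\pi)=\pi,\ f(0)=0,\ f'(\pi)=1\,\}
\]
is a global slice, i.e.\ every $\PSL(2,\RR)$-orbit in $\Diff^+(S^1)$ meets $S$ in exactly one point, so that the composite $S\hookrightarrow\Diff^+(S^1)\twoheadrightarrow\OO_{\rm Teich}$ is a bijection. This reduces to the standard fact that $\PSL(2,\RR)$, acting on $S^1\cong\RP^1$, acts simply transitively on the set of triples $(a,b,r)$ with $a\neq b\in S^1$ and $r\in\RR_{>0}$, via $h\cdot(a,b,r)=(h(a),h(b),h'(a)\,r)$: transitivity holds because $\PSL(2,\RR)$ is transitive on ordered pairs of distinct points, while the pointwise stabiliser of such a pair is the hyperbolic one-parameter subgroup through it, whose multiplier exhausts $\RR_{>0}$; freeness holds because a nontrivial hyperbolic element has multiplier $\neq1$ at its fixed points. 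Applying this with $(a,b,r)=(f(\pi),f(0),f'(\pi))$ (note $f'(\pi)>0$) produces the unique $h\in\PSL(2,\RR)$ with $h\cdot(f(\pi),f(0),f'(\pi))=(\pi,0,1)$, and then $h\circ f\in S$ by the chain rule; uniqueness of $h$ shows that the orbit of $f$ meets $S$ only once.

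Next I would transport $S$ through the bijection $f\mapsto(u,y,z)$ of Theorem \ref{thm:f_teich}. For $f\in S$ one has $y=f^{-1}(\pi)=\pi$ and $z=f^{-1}(0)=0$, so the triple collapses to the single function $u\in D_\pi$ given by \eqref{eq:u}; conversely the inverse formula of Theorem \ref{thm:f_teich}, evaluated at $y=\pi$, $z=0$, returns an $f$ with $f(\pi)=\pi$, $f(0)=0$. By the local expansion computed in the proof of Theorem \ref{thm:f_teich} one has $u(\pi)=-\log f'(\pi)$ and $u'(\pi)=0$, so the normalisation $f'(\pi)=1$ is equivalent to $u(\pi)=0$. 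Finally, since $\sin^2((x-\pi)/2)=\cos^2(x/2)$ and $e^{u(\pi)}=1$, the integral condition cutting out the fibre $D_\pi$ becomes $\int_{S^1}(e^{u(x)}-1)/\cos^2(x/2)\,dx=0$. Collecting these identifications shows that $S$ is in bijection with the set on the right-hand side of the Proposition, which together with the slice statement gives the claimed description of $\OO_{\rm Teich}$.

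The hard part is the slice statement: one must be sure that the three normalisations are jointly attainable by a unique M\"obius transformation, which is precisely the simple transitivity on $1$-jet data used above. A minor technical wrinkle is that the chart $F=2\tan(f/2)$ degenerates at $f=\pi$, so the condition on the derivative is best imposed intrinsically as $f'(\pi)=1$ (equivalently, in the chart $\cot(f/2)=2/F$ near $f=\pi$) rather than through $F'$; everything else is bookkeeping with \eqref{eq:u} and the trigonometric identity $\sin^2((x-\pi)/2)=\cos^2(x/2)$.
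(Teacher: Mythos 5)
Your proposal is correct and follows essentially the same route as the paper: normalise by the unique M\"obius transformation so that $f(0)=0$, $f(\pi)=\pi$, $f'(\pi)=1$, which forces $y=\pi$, $z=0$, $u(\pi)=-\log f'(\pi)=0$ in the bijection of Theorem \ref{thm:f_teich}, and then note $\sin^2((x-\pi)/2)=\cos^2(x/2)$ in the integral constraint. The only difference is that you spell out, via the simply transitive $\PSL(2,\RR)$-action on the $1$-jet data $(f(\pi),f(0),f'(\pi))$, the existence and uniqueness of the slice point, which the paper asserts without proof.
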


\begin{proof}
  Since 
  $\Diff^+(S^1) \cong E$, the space $E$ inherits the free ${\rm PSL}(2, \mathbb{R})$ action. This action admits a section: each ${\rm PSL}(2, \mathbb{R})$ orbit has exactly one point with $f(0)=0, f(\pi)=\pi$ and $f'(\pi)=1$. This implies that $y=\pi, z=0$ and $u(\pi)=-\log(f'(\pi))=0$. Hence, the Teichm\"uller orbit is identified with the fiber $E_{\pi, 0}$ with an extra condition of $u(\pi)=0$.
\end{proof}

We now describe the moment map and the symplectic form on the Teichm\"uller orbit in terms of this model.

\begin{theorem}  \label{thm:u_teich}
The moment map and the symplectic form on the Teichm\"uller orbit are given by
\begin{equation}
    \mu = \frac{c}{12}\left( u''(x) - \frac{1}{2} u'(x)^2 - u'(x) \tan\left(\frac{x}{2}\right) + \frac{1}{2} \right),
\end{equation}
\begin{equation}  \label{eq:omega_teich}
    \omega = \frac{c}{24} \int_{S^1} (\delta u \wedge \delta u') \, dx.
\end{equation}
\end{theorem}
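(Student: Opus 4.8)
The plan is to transport $\mu$ and $\omega$ from the coadjoint orbit to the model of the preceding Proposition (the slice $y=\pi$, $z=0$, with $u(\pi)=u'(\pi)=0$ and the integral constraint) exactly as in the proof of Theorem~\ref{thm:main_theorem}, with the function $h_0(y)=2\tan(y/2)$ now playing the role that $g_\alpha$ played in the hyperbolic case. The elementary inputs are that $h_0$ has \emph{constant} Schwarzian $S(h_0)=\tfrac12=12b_0/c$ (a one-line computation from $h_0''/h_0'=\tan(y/2)$ together with $\sec^2-\tan^2=1$), and that $h_0''/h_0'$ is $\pi$-periodic. By the cocycle property of the Schwarzian,
$$ \mu = b_0 f'^2 + \tfrac{c}{12}S(f) = \tfrac{c}{12}\big((S(h_0)\circ f)\,f'^2 + S(f)\big) = \tfrac{c}{12}\,S(F), \qquad F:=h_0\circ f = 2\tan(f/2). $$
Writing $v=\log F'$ one has $S(F)=v''-\tfrac12(v')^2$, and from the explicit inverse in Theorem~\ref{thm:f_teich}, specialised to $y=\pi$, $z=0$, one reads off $F'(x)=e^{u(x)}/\cos^2(x/2)$, i.e.\ $v=u-\log\cos^2(x/2)$. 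Substituting $v'=u'+\tan(x/2)$ and $v''=u''+\tfrac12\sec^2(x/2)$, and once more using $\sec^2-\tan^2=1$, gives the stated formula for $\mu$.

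For the symplectic form I would start from the KKS expression $\omega=\tfrac12\int_{S^1}\delta\mu\wedge\tfrac{\delta f}{f'}\,dx$ of \eqref{eq:KKS_symplectic_form}. The chain rule applied to $F=h_0\circ f$ gives $\delta F=(h_0'\circ f)\,\delta f$ and $F'=(h_0'\circ f)\,f'$, hence $\tfrac{\delta f}{f'}=\tfrac{\delta F}{F'}$, and therefore $\omega=\tfrac{c}{24}\int_{S^1}\delta S(F)\wedge\tfrac{\delta F}{F'}\,dx$ --- formally the same integral that appears for the parabolic orbit in Theorem~\ref{thm:parabolic}, but with $F$ in place of $f$. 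Running the same integration by parts (set $\beta=\tfrac{\delta F}{F'}$, use the identity $\delta v=\beta'+v'\beta$, integrate by parts twice) produces
$$ \omega \;=\; \tfrac{c}{24}\int_{S^1}\delta v\wedge\delta v'\,dx \;-\; \Big[\,\tfrac{h_0''(f)}{h_0'(f)}\,\tfrac{\delta f\wedge\delta f'}{f'}\,\Big]_{\partial}, $$
where $\partial$ denotes the endpoints of $S^1$ cut open at the pole $x=y$ of $F$. On the Teichm\"uller slice $v-u=-\log\cos^2(x/2)$ is a fixed function, so $\delta v=\delta u$ and the bulk term equals $\tfrac{c}{24}\int_{S^1}\delta u\wedge\delta u'\,dx$, which is the answer claimed in the theorem.

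The main obstacle --- and the only genuinely new feature compared with the hyperbolic and parabolic cases --- is to show that the boundary term vanishes. At the cut point $x=0\sim 2\pi$ there is nothing to prove, since $\tfrac{h_0''(f)}{h_0'(f)}\,\tfrac{\delta f\wedge\delta f'}{f'}=\tan(f/2)\,\tfrac{\delta f\wedge\delta f'}{f'}$ is $2\pi$-periodic (because $\tan(f(x+2\pi)/2)=\tan(f(x)/2+\pi)=\tan(f(x)/2)$), so the two endpoint contributions cancel. At $x=y$, where $f(y)=\pi$, the factor $\tan(f/2)$ has a simple pole; what makes the term vanish is that on the slice the conditions $f(y)=\pi$ and $f'(y)=1$ (equivalently $u(y)=u'(y)=0$) are held \emph{fixed}, so $\delta u$ vanishes to second order at $x=y$ and, since $\delta f=\cos^2(f/2)\,\delta F$ with $\cos^2(f(y)/2)=0$ and $\delta F$ bounded near $y$, $\delta f$ vanishes to second order as well. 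Then $\tan(f/2)\,\tfrac{\delta f\wedge\delta f'}{f'}\to 0$ from both sides of $y$, and the same double vanishing renders all intermediate integrands in the integration by parts integrable, so the manipulation is legitimate. With the boundary contribution gone, Theorem~\ref{thm:u_teich} follows.
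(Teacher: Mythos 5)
Your proposal is correct and follows essentially the same route as the paper: the moment map via the Schwarzian cocycle property with $g(y)=2\tan(y/2)$ and $F'(x)=e^{u(x)}/\cos^2(x/2)$, and the symplectic form by transporting the KKS form to the $F$-variable as in \eqref{eq:omega_in_terms_of_F} and noting $\delta\log F'=\delta u$; your explicit check that the boundary contribution at the pole $x=y$ vanishes (using $\delta u(\pi)=\delta u'(\pi)=0$ on the slice) is a welcome elaboration of a point the paper treats only implicitly. The only slip is the claim that $h_0''/h_0'=\tan(y/2)$ is ``$\pi$-periodic'' --- it is $2\pi$-periodic in $y$ (equivalently $\tan$ is $\pi$-periodic), which is exactly the fact you correctly invoke later when cancelling the endpoint contributions at $x=0\sim 2\pi$.
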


\begin{proof}
  For the moment map, we compute
  \[
  \mu = \frac{c}{12}\left(\frac{1}{2} f'(x)^2 + S(f)\right) = 
  \frac{c}{12}\left( (S(g) \circ f) f'(x)^2 + S(f) \right) =
  \frac{c}{12} \, S(F)
  \]
  where $g(y) = 2 \tan(y/2)$ and  $F(x) = g(f(x))$. Here we have used the composition formula for the Schwarzian derivative and the fact that $S(g)=1/2$. Then,
  since
  \[
  F'(x) = \frac{e^{u(x)}}{\cos^2(x/2)},
  \]
  we obtain
  \[
  S(F) = u''(x) - \frac{1}{2}u'(x)^2 - u'(x) \tan\left(\frac{x}{2}\right) + \frac{1}{2},
  \]
  as required. Note that the expression $u'(x) \tan(x/2)$ is smooth at $x=\pi$ since $u'(\pi)=0$.
  
  For the calculation of $\omega$, we use equation \eqref{eq:omega_in_terms_of_F} with the periodic function $g(y) = \tan(y/2)$. Observe  that $\delta \log(F') = \delta u$ since the factor $\cos^2(x/2)$ does not contribute in the de Rham differential. This  completes the proof. 
\end{proof}

Theorem \ref{thm:u_teich} implies the following expression for the moment of rigid $S^1$-rotations:
\begin{equation} \label{eq:moment_teich}
\mu_{S^1} = \frac{c}{12} \, \int_{S^1} \left( u'(x) \tan\left(\frac{x}{2}\right) -\frac{1}{2} u'(x)^2\right) + \frac{\pi c}{12}.
\end{equation}
Observe that in this model the symplectic form \eqref{eq:omega_teich} is constant, and the moment map \eqref{eq:moment_teich} is quadratic. However, this is not an equivariant Darboux chart because the function $u(x)$ satisfies the non-linear integral constraint. The Conjecture in the beginning of the Section states that one can find a transformation which would remedy this problem and obtain a model of the Teichm\"uller orbit on a vector space with constant symplectic structure and quadratic moment map.

\section{Partition function and correlation functions of bilocal operators}\label{sec:partition_function_and_correlation_functions_of_bilocal_operators}
In this section we compute the partition function and correlation functions of bilocal operators in  Darboux charts on hyperbolic Virasoro coadjoint orbits. While the partition function readily recovers known results \cite{Alekseev_Shatashvili_char_orbits_DH, Stanford_Witten}, the Darboux chart makes the exact computation of correlation functions feasible. 

\subsection{Partition function}
The partition function of a Schwarzian theory is a path integral of the form
\begin{equation}
  Z(t) = \int  d\lambda(f)e^{t\mu_{S^1}(f)},
  \label{eq:Z}
\end{equation}
where $\mu_{S^1}(f)$ is the moment map for rigid $S^1$-rotations, $t$ is a parameter, and 
$d\lambda(f)$ stands for some (usually ill defined) measure on ${\rm Diff}^+(S^1)$.

In the case of hyperbolic coadjoint orbits, the global equivariant Darboux chart allows to define the path integral \eqref{eq:Z} as a Gaussian integral. Indeed, recall that for
\[
u(x) = \alpha f(x) + \log f'(x),
\]
with $\alpha^2 = - 24b_0/c$, the symplectic form $\omega$ and the moment map $\mu_{S^1}$ of rigid rotations are of the following form, {\em c.f.}\ \eqref{eq:mu_and_omega_hyp} and \eqref{eq:mu_S1_hyp}: 
\begin{align*}
  \omega &= \frac{c}{24} \int_{0}^{2\pi} \left( \delta u(x) \wedge \delta u'(x) \right) dx, \\
  \mu_{S^1} &= - \frac{c}{24} \int_0^{2\pi} u'^2(x) dx. 
\end{align*}
Here $u(x+2\pi)=u(x)+2\pi \alpha$.

Inspired by the finite dimensional situation, one can argue that the integration measure $d\lambda(f)$ is the formal Liouville measure
$$
d\lambda(f) = \mathcal{D} u \cdot {\rm Pf}_\zeta(\omega),
$$
where $\mathcal{D} u$ is the formal Lebesgues measure (which is ill defined), and 
${\rm Pf}_{\zeta}(\omega)$ is  the $\zeta$-regularized Pfaffian of the symplectic form
\begin{equation*}
  {\rm Pf}_{\zeta}(\omega) = {\rm Pf}_{\zeta}\left( \frac{c}{12} \del_x \right) 
  = \left( \frac{24\pi}{c} \right)^{1/2}.
\end{equation*}
While different elements in the path integral \eqref{eq:Z} are ill defined, the whole expression is a well defined Gaussian integral. Indeed, we obtain
\begin{equation}
  Z(t) = {\rm Pf}_{\zeta}(\omega)\, \int \mathcal{D} u \, e^{t\mu_{S^1}(u)} = \left( \frac{24\pi}{c} \right)^{1/2} \frac{e^{t\mu_{S^1}(u_{cl})}}{\sqrt{\det_{\zeta}(-\frac{ct}{12}\Delta)}}
\end{equation}
where $\Delta$ denotes the Laplacian on the circle and $u_{cl}$ is a solution of the classical equations of motion $u''(x) = 0$ satisfying the quasi-periodicity condition $u(x + 2\pi) = u(x) + 2\pi\alpha$, {\em i.e.}\ 
\begin{equation*}
  u_{cl}(x) = \alpha x + u_0.
\end{equation*}
The expression  $\det_{\zeta}$ stands the $\zeta$-regularized determinant.
In the case at hand it is given by
\begin{equation*}
  \det\nolimits_{\zeta}(-\frac{ct}{12}\Delta) = \frac{48\pi^2}{ct}.
\end{equation*}
Putting things together, we conclude that
\begin{equation}
    Z(t) 
    = \left( \frac{t}{2\pi}\right)^{1/2} \exp\left(- \frac{c t \alpha^2 \pi }{12}\right) 
    = \left( \frac{t}{2\pi}\right)^{1/2} e^{ 2 \pi b_0 t }
    \label{eq:Z_Gaussian}
\end{equation}
This recovers the results of  \cite{Alekseev_Shatashvili_char_orbits_DH,Stanford_Witten}.

\subsection{Correlator of a single bilocal operator}
In the literature, 
one often studies ${\rm PSL}(2,\RR)$-invariant bilocal quantities $\OO(x_1,x_2)$ which in terms of $F(x) = e^{\alpha f(x)}/\alpha$ 
are given by
\begin{equation}
  \OO(x_1,x_2) = \frac{F(x_2) - F(x_1)}{\sqrt{F'(x_2) F'(x_1)} },
\end{equation}
where we assume that $x_2 > x_1$.
In terms of the diffeomorphism $f(x)$, one finds the expression
\begin{equation*}
  \OO(x_1,x_2) = \frac{2}{\alpha} \frac{\sinh\left( \frac{f(x_2) - f(x_1)}{2} \right)}{\sqrt{f'(x_2)f'(x_1)}},
\end{equation*}
and in Darboux coordinates $u(x) = \log F'(x)$, we obtain 
\begin{equation}
  \OO(x_1,x_2) = e^{- (u(x_1) + u(x_2) )/2 } \int_{x_1}^{x_2} e^{u(s)} ds.
\end{equation}

In this Section, we will compute the correlation function (the Gaussian expectation value) of the following form:
\begin{equation}
  \begin{split}
  \left\langle \OO(x_1,x_2) \right\rangle &= \frac{1}{Z(t)} \int d\lambda(f) \OO(x_1,x_2) e^{t\mu_{S^1}(f)} \\
  &= \frac{{\rm Pf}_\zeta(\omega)}{Z(t)} \int_{x_1}^{x_2} ds \int \mathcal{D}u\  e^{\mathcal S(u,s,t)}.
  \end{split}
  \label{eq:1_insertion}
\end{equation}
Here, 
\begin{equation*}
 \mathcal S(u,s,t) = -\frac{ct}{24} \int_{0}^{2\pi} u'^2(x) dx - \frac12 u(x_1) - \frac12 u(x_2) + u(s)
\end{equation*}
is a quadratic expression in $u$ (in the physics language, the effective action).\footnote{It is standard to denote the action functional by the letter $\mathcal S$; the action $\mathcal S(u,s,t)$ should not be confused with the Schwarzian derivatives $S(f)$ and $S(F)$.} 
We observe that the correlator \eqref{eq:1_insertion} is again a Gaussian integral and hence exactly solvable
\begin{equation}
\begin{split}
  \left\langle \OO(x_1,x_2) \right\rangle 
  &= \frac{{\rm Pf}_{\zeta}(\omega)}{Z(t)} \int_{x_1}^{x_2}ds\ \frac{e^{\mathcal S(u_{cl}, s, t)}}{\sqrt{\det_{\zeta}(- \frac{ct}{12}\Delta)}} \\
  &= e^{-2\pi b_0 t} \int_{x_1}^{x_2} ds\ e^{\mathcal S(u_{cl},s, t)},
  \end{split}
\end{equation}
where $u_{cl}$ denotes the critical point (in this case, the maximiser) of the quadratic functional 
$\mathcal S(u,s,t)$.
The function $u_{cl}$ is continuous, and it is  subject to the following conditions:
\begin{enumerate}[(i)]
  \item away from $x_1, x_2$ and $s$, $u_{cl}(x) = mx + n$ is a linear function
  \item $u_{cl}$ satisfies the quasi-periodicity condition $u_{cl}(x + 2\pi) = u_{cl}(x) + 2\pi\alpha$
  \item $u'_{cl}$ is discontinuous  
    \begin{equation}
      \begin{cases} u_{cl}'(x_1 + \varepsilon) - u_{cl}'(x_1 - \varepsilon) = \phantom{-}\frac{6}{c t} \\ u_{cl}'(x_2 + \varepsilon) - u_{cl}'(x_2 - \varepsilon) =  \phantom{-}\frac{6}{c t} \\ u_{cl}'(s\phantom{_2} + \varepsilon) - u_{cl}'(s\phantom{_2} - \varepsilon) = - \frac{12}{c t}  \end{cases}
      \label{eq:bc}
    \end{equation}
\end{enumerate}
This problem is solved by a piecewise linear Ansatz
\begin{equation*}
  \begin{split}
    u_{cl}(x) = 
    \begin{cases}
      mx + n \qquad & x \leq x_1 \\
      mx_1 + n + a(x-x_1) \qquad & x_1 < x \leq s \\
      mx_1 + n + a(s-x_1) + b(x - s) \qquad & s < x \leq x_2 \\
      mx_1 + n + a(s-x_1) + b(x_2 - s) + c (x-x_2) \qquad &x_2 \leq 2\pi
    \end{cases}
  \end{split}
\end{equation*}
where we assume that $x_1 < s < x_2$.
From \eqref{eq:bc}, we find a relation among the slopes
\begin{equation*}
  a = m + \frac{6}{c t}, \qquad b = m - \frac{6}{c t}, \qquad c = m. 
\end{equation*}
The quasi-periodicity condition then yields an expression for $m$ as a function of $s$:
\begin{equation*}
  m(s) = \alpha - \frac{3}{c t \pi}\left( 2s - x_1 - x_2 \right) . 
\end{equation*}
Putting things together, we obtain 
\begin{equation}
  \mathcal S(u_{cl},s,t) = \frac{3}{2c t}(x_2 - x_1) - \frac{3}{4\pi c t} \left(2s - x_1 - x_2 - \frac{ct\pi}{3}\alpha\right)^2,
\end{equation}
and 
\begin{equation}
  \begin{split}
    \left\langle \OO(x_1,x_2) \right\rangle &= e^{- 2\pi b_0 t} \int_{x_1}^{x_2}ds\ e^{\mathcal S(u_{cl}, s, t)} \\
    &= \pi e^{-2\pi b_0 t + \frac{3 x_{21}}{2c t}} \sqrt{\frac{c t}{12}} \Big( G(x_{21}) - G(-x_{21}) \Big)
\end{split}
\label{eq:bilocal}
\end{equation}
where we set $x_{21} = x_2 - x_1$ and 
\begin{equation*}
  G(x) = \erf\left( \sqrt{-2\pi b_0} + x \sqrt{\frac{3}{4\pi c t}} \right).
\end{equation*}

\subsubsection{Classical vs quantum calculus} 
In this Section, we use an example of the correlator $\langle \OO(x_1,x_2) \rangle$
to illustrate the difference between the standard (classical) calculus, and calculus under the Gaussian path integral.

In more detail, for $f(x)$ a diffeomorphism of the circle the bilocal quantity $\OO(x_1,x_2)$ is a smooth function in two variables.
In particular, it admits a Taylor expansion $\OO(x_1,x_2)$ near the diagonal $x_1 = x_2$. 
As we will see, this no longer holds under the Gaussian path integral.\medskip

Recall that in terms of  $F = e^{\alpha f(x)} / \alpha$ the moment map and the bilocal observable  take the form
\begin{equation}
    \mu_{S^1}(F) = \frac{c}{12}\int_{0}^{2\pi} S(F) dx,\qquad \OO(x_1, x_2) = \frac{F(x_2) - F(x_1)}{\sqrt{F'(x_2) F'(x_1)} }.
\end{equation}

We will be interested in the Taylor expansion in $\varepsilon$ of the following quantity:
\begin{equation}
  \left\langle \int_{0}^{2\pi}  \OO(x, x + \varepsilon) dx \right\rangle.
\end{equation}
First, we perform the Taylor expansion inside the correlator,
\begin{equation}
  \OO(x, x + \varepsilon) = \varepsilon - \frac{\varepsilon^3}{12}S(F) + \OO\left( \varepsilon^4 \right)
\end{equation}
and substitute it in the path integral: 
\begin{equation}
  \begin{split}
    \left\langle \int_{0}^{2\pi}  \OO(x, x + \varepsilon ) dx \right\rangle &= 2\pi\varepsilon - \frac{\varepsilon^3}{c} \left\langle \frac{c}{12}\int_0^{2\pi} S(F)(x) dx \right\rangle + \OO(\varepsilon^4) \\
   &= 2\pi\varepsilon - \frac{\varepsilon^3}{c} \left\langle \mu_{S^1}(F) \right\rangle + \OO(\varepsilon^4).
   \end{split}
\end{equation}
One can compute the expectation $\left\langle \mu_{S^1}(F) \right\rangle$ as follows:
$$
\left\langle \mu_{S^1}(F) \right\rangle = \frac{d \log(Z(t))}{dt} = 2\pi b_0 + \frac{1}{2t}.
$$
Hence,  
\begin{equation}  
  \left\langle \int_{0}^{2\pi}  \OO(x, x + \varepsilon ) dx \right\rangle = 2\pi\varepsilon - \frac{\varepsilon^3}{c} \left( 2\pi b_0 + \frac{1}{2t} \right) + \OO(\varepsilon^4). 
  \label{eq:Taylor}
\end{equation}
On the other hand,  \eqref{eq:bilocal} 
gives rise to an exact result: 
\begin{equation}
  \begin{split}
    \left\langle \int_0^{2\pi} \OO\left( x, x+\varepsilon \right) dx \right\rangle &= 2\pi^2 e^{-2\pi b_0 + \frac{3\varepsilon}{2ct}} \sqrt{\frac{c t}{12}} \Big( G(\varepsilon) - G(-\varepsilon) \Big) \\
  &= e^{\frac{3 \varepsilon}{2 c t}} \left( 2\pi \varepsilon - \frac{\varepsilon^3}{c}\left( 2\pi b_0 + \frac{1}{2t} \right)\right) + \OO(\varepsilon^4).
\end{split}
\label{eq:exact}
\end{equation}
While the leading degree one terms in \eqref{eq:Taylor} and \eqref{eq:exact} are the same, the subleading terms (starting from $\varepsilon^2$ terms) differ. 
The reason is that classical paths $u_{cl}$ are  not smooth: their first derivatives jump (see \eqref{eq:bc}). Therefore, the Taylor expansion before Gaussian integration is not valid, in general. 

\subsection{Correlators of two bilocals}
Let us study the insertion of two bilocal operators.
In calculating the correlator 
\begin{equation}
  \left\langle \OO(x_1,x_2)\OO(x_3,x_4) \right\rangle 
\end{equation}
we have to distinguish two cases\medskip

\begin{enumerate}[(A)]
  \item time-ordered: \hspace{1.1cm}$0 \leq x_1 < x_2 < x_3 < x_4 \leq 2\pi$, 
  \item out-of-time-ordered: $0 \leq x_1 < x_3 < x_2 < x_4 \leq 2\pi$.
\end{enumerate}\medskip

It is instructive to visualize the two cases as the two possible scenarios of two intervals embedded into the circle: either they are disjoint, or they overlap, {\em c.f.}\ Figure \ref{fig:TO_OTO}. 
\begin{figure}[htb]
  \centering
  \begin{tikzpicture}
    \draw (0,0) circle (1cm);

    \node[draw,circle,fill=black,scale=.3] (x3) at (45:1) {};
    \node[draw,circle,fill=black,scale=.3] (x4) at (135:1) {};
    \node[draw,circle,fill=black,scale=.3] (x1) at (225:1) {};
    \node[draw,circle,fill=black,scale=.3] (x2) at (315:1) {};


    \node[left] at (x1) {$x_1$};
    \node[right] at (x2) {$x_2$};
    \node[right] at (x3) {$x_3$};
    \node[left] at (x4) {$x_4$};

    \draw (x1) edge[bend left] (x2);
    \draw (x3) edge[bend left] (x4);
  \end{tikzpicture}
  \hspace{2cm}
  \begin{tikzpicture}
    \draw (0,0) circle (1cm);

    \node[draw,circle,fill=black,scale=.3] (x2) at (45:1) {};
    \node[draw,circle,fill=black,scale=.3] (x4) at (135:1) {};
    \node[draw,circle,fill=black,scale=.3] (x1) at (225:1) {};
    \node[draw,circle,fill=black,scale=.3] (x3) at (315:1) {};


    \node[left] at (x1) {$x_1$};
    \node[right] at (x2) {$x_2$};
    \node[right] at (x3) {$x_3$};
    \node[left] at (x4) {$x_4$};

    \draw (x1) -- (x2);
    \draw (x3) -- (x4);
  \end{tikzpicture}
  \caption{Left: time-ordered configuration; right: out-of-time-ordered configuration. The circle is oriented counterclockwise.}
  \label{fig:TO_OTO}
\end{figure}
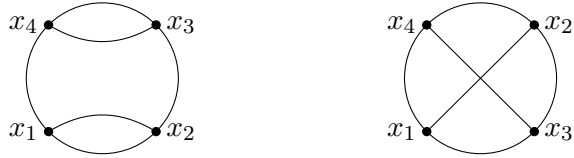

\subsection{Time-ordered correlators}
We start by considering the time-ordered case.
The calculation is analogous to the one of the correlator $\left\langle \OO(x_1,x_2) \right\rangle$ in the previous section.
We recall that in the Darboux coordinate $u(x)$, a bilocal operator takes the form 
\begin{equation*}
  \OO(x_1,x_2) = e^{-(u(x_1) + u(x_2))/2} \int_{x_1}^{x_2} e^{u(s)}ds.
\end{equation*}
Then 
\begin{equation}
  \begin{split}
  \left\langle \OO(x_1,x_2)\OO(x_3,x_4) \right\rangle &= \frac{{\rm Pf}_\zeta(\omega)}{Z(t)} \int \mathcal{D} u\ \OO(x_1,x_2) \OO(x_3,x_4) e^{t\mu_{S^1}(u)} \\
  &= e^{-2\pi b_0 t} \int_{x_3}^{x_4} d\sigma \int_{x_1}^{x_2} d\tau\ e^{\mathcal S(u_{cl},\sigma, \tau, t)}
  \end{split}
\end{equation}
where $u_{cl}$ denotes the critical point of the functional 
\begin{equation}
  \mathcal S(u, \sigma, \tau, t) = - \frac{ct}{24} \int_0^{2\pi} u'^2(x) dx - \frac12\sum_{i=1}^4 u(x_i) + u(\sigma) + u(\tau).
  \label{eq:mu_eff}
\end{equation}
As before, we make a piecewise linear Ansatz for $u_{cl}$ subject to conditions\medskip

\begin{enumerate}[(i)]
  \item away from $x_i$, $\sigma$ and $\tau$, $u_{cl}(x) = mx + n$ is a linear function
  \item $u_{cl}$ satisfies the quasi-periodicity condition $u_{cl}(x + 2\pi) = u_{cl}(x) + 2\pi\alpha$
  \item $u_{cl}$ is continuous and satisfies (for all $i=1,\dots,4$)
    \begin{equation}
      \begin{split}
      \begin{cases}
	u_{cl}'(x_i + \varepsilon) - u_{cl}'(x_i - \varepsilon) &= \phantom{-}\frac{6}{ct} \\ 
	u_{cl}'(\tau + \varepsilon) - u_{cl}'(\tau - \varepsilon) &= -\frac{12}{ct} \\ 
	u_{cl}'(\sigma + \varepsilon) - u_{cl}'(\sigma - \varepsilon) &= -\frac{12}{ct}
      \end{cases}
    \end{split}
    \end{equation}
\end{enumerate}\medskip

\noindent A lengthy calculation shows that
\begin{equation}
  \mathcal S(u_{cl}, \sigma, \tau, t) = \frac{3}{2 c t} \left( x_{43} + x_{21} \right) - \frac{3}{4\pi ct} \left( 2\sigma + 2\tau - \sum_{i=1}^4 x_i - \frac{ct\pi}{3} \alpha \right)^2,
\end{equation}
where $x_{ij} = x_i - x_j$.
Finally, the time-ordered correlator turns out to be 
\begin{equation}
    \left\langle \OO(x_1,x_2)\OO(x_3,x_4) \right\rangle = \frac{\pi}{2} e^{-2\pi b_0 t + \frac{3}{2ct}(x_{43} + x_{21})} \sqrt{\frac{c t}{12}} \Big[  E(x_{21},x_{43}) + G(x_{21},x_{43}) \Big]
\end{equation}
where 
\begin{equation*}
  E(x,y) = 4 \sqrt{\frac{ct}{12}} \sum_{\varepsilon_1,\varepsilon_2 = \pm 1} \varepsilon_1 \varepsilon_2 \exp\left( -\frac{3}{4\pi ct} \left(\varepsilon_1 x + \varepsilon_2 y - \frac{c\pi t}{3} \alpha\right)^2  \right)
\end{equation*}
and
\begin{equation*}
  G(x,y) = \hspace{-1em}\sum_{\varepsilon_1, \varepsilon_2 = \pm 1} \hspace{-.5em} \varepsilon_1 \varepsilon_2 \erf\left( \sqrt{-2\pi b_0 t} - (\varepsilon_1 x + \varepsilon_2 y)\sqrt{\frac{3}{4\pi c t}} \right) \left( \frac{\pi ct}{3} \alpha - \varepsilon_1 x - \varepsilon_2 y \right)
\end{equation*}

\subsection{Out-of-time-ordered correlators}
For out-of-time-ordered correlators, we have to distinguish the five cases shown in Figure \ref{fig:possible_OTO_configuration},
\begin{figure}[htb]
  \centering
  \begin{subfigure}[b]{0.4\textwidth}
    \centering
  \begin{tikzpicture}
    \draw (0,0) circle (1cm);

    \node[draw,circle,fill=black,scale=.3] (x2) at (45:1) {};
    \node[draw,circle,fill=black,scale=.3] (x4) at (135:1) {};
    \node[draw,circle,fill=black,scale=.3] (x1) at (225:1) {};
    \node[draw,circle,fill=black,scale=.3] (x3) at (315:1) {};

    \node[draw,circle,fill=black,scale=.3,blue] (tau) at (270:1) {};
    \node[draw,circle,fill=black,scale=.3,blue] (sigma) at (90:1) {};

    \node[left] at (x1) {$x_1$};
    \node[right] at (x2) {$x_2$};
    \node[right] at (x3) {$x_3$};
    \node[left] at (x4) {$x_4$};
    \node[below,blue] at (tau) {$\tau$};
    \node[above,blue] at (sigma) {$\sigma$};

    \draw (x1) -- (x2);
    \draw (x3) -- (x4);
  \end{tikzpicture}
  \caption{\mbox{$x_1 < \tau < x_3 < x_2 < \sigma < x_4$}}
\end{subfigure}
\hspace{1cm}
  \begin{subfigure}[b]{0.4\textwidth}
    \centering
  \begin{tikzpicture}
    \draw (0,0) circle (1cm);

    \node[draw,circle,fill=black,scale=.3] (x2) at (45:1) {};
    \node[draw,circle,fill=black,scale=.3] (x4) at (135:1) {};
    \node[draw,circle,fill=black,scale=.3] (x1) at (225:1) {};
    \node[draw,circle,fill=black,scale=.3] (x3) at (315:1) {};

    \node[draw,circle,fill=black,scale=.3,blue] (tau) at (270:1) {};
    \node[draw,circle,fill=black,scale=.3,blue] (sigma) at (0:1) {};

    \node[left] at (x1) {$x_1$};
    \node[right] at (x2) {$x_2$};
    \node[right] at (x3) {$x_3$};
    \node[left] at (x4) {$x_4$};
    \node[below,blue] at (tau) {$\tau$};
    \node[right,blue] at (sigma) {$\sigma$};

    \draw (x1) -- (x2);
    \draw (x3) -- (x4);
  \end{tikzpicture}
  \caption{$x_1 < \tau < x_3 < \sigma < x_2 < x_4$}
\end{subfigure}
\hspace{1cm}
  \begin{subfigure}[b]{0.4\textwidth}
    \centering
  \begin{tikzpicture}
    \node (phantom) at (270:1) {};
    \node[below] at (phantom) {$\phantom{\tau}$};

    \draw (0,0) circle (1cm);

    \node[draw,circle,fill=black,scale=.3] (x2) at (45:1) {};
    \node[draw,circle,fill=black,scale=.3] (x4) at (135:1) {};
    \node[draw,circle,fill=black,scale=.3] (x1) at (225:1) {};
    \node[draw,circle,fill=black,scale=.3] (x3) at (315:1) {};

    \node[draw,circle,fill=black,scale=.3,blue] (tau) at (0:1) {};
    \node[draw,circle,fill=black,scale=.3,blue] (sigma) at (90:1) {};

    \node[left] at (x1) {$x_1$};
    \node[right] at (x2) {$x_2$};
    \node[right] at (x3) {$x_3$};
    \node[left] at (x4) {$x_4$};
    \node[right,blue] at (tau) {$\tau$};
    \node[above,blue] at (sigma) {$\sigma$};

    \draw (x1) -- (x2);
    \draw (x3) -- (x4);
  \end{tikzpicture}
  \caption{$x_1 < x_3 < \tau < x_2 < \sigma < x_4$}
\end{subfigure}
\hspace{1cm}
  \begin{subfigure}[b]{0.4\textwidth}
    \centering
  \begin{tikzpicture}
    \draw (0,0) circle (1cm);

    \node[draw,circle,fill=black,scale=.3] (x2) at (45:1) {};
    \node[draw,circle,fill=black,scale=.3] (x4) at (135:1) {};
    \node[draw,circle,fill=black,scale=.3] (x1) at (225:1) {};
    \node[draw,circle,fill=black,scale=.3] (x3) at (315:1) {};

    \node[draw,circle,fill=black,scale=.3,blue] (tau) at (-10:1) {};
    \node[draw,circle,fill=black,scale=.3,blue] (sigma) at (10:1) {};

    \node[left] at (x1) {$x_1$};
    \node[right] at (x2) {$x_2$};
    \node[right] at (x3) {$x_3$};
    \node[left] at (x4) {$x_4$};
    \node[right,blue] at (tau) {$\tau$};
    \node[right,blue] at (sigma) {$\sigma$};

    \draw (x1) -- (x2);
    \draw (x3) -- (x4);
  \end{tikzpicture}
  \caption{$x_1 < x_3 < \tau < \sigma < x_2 < x_4$}
\end{subfigure}
\par\bigskip
  \begin{subfigure}[b]{0.4\textwidth}
    \centering
  \begin{tikzpicture}
    \draw (0,0) circle (1cm);

    \node[draw,circle,fill=black,scale=.3] (x2) at (45:1) {};
    \node[draw,circle,fill=black,scale=.3] (x4) at (135:1) {};
    \node[draw,circle,fill=black,scale=.3] (x1) at (225:1) {};
    \node[draw,circle,fill=black,scale=.3] (x3) at (315:1) {};

    \node[draw,circle,fill=black,scale=.3,blue] (sigma) at (-10:1) {};
    \node[draw,circle,fill=black,scale=.3,blue] (tau) at (10:1) {};

    \node[left] at (x1) {$x_1$};
    \node[right] at (x2) {$x_2$};
    \node[right] at (x3) {$x_3$};
    \node[left] at (x4) {$x_4$};
    \node[right,blue] at (tau) {$\tau$};
    \node[right,blue] at (sigma) {$\sigma$};

    \draw (x1) -- (x2);
    \draw (x3) -- (x4);
  \end{tikzpicture}
  \caption{$x_1 < x_3 < \sigma < \tau < x_2 < x_4$}
\end{subfigure}
  \caption{All possible out-of-time-ordered configurations.}
  \label{fig:possible_OTO_configuration}
\end{figure}
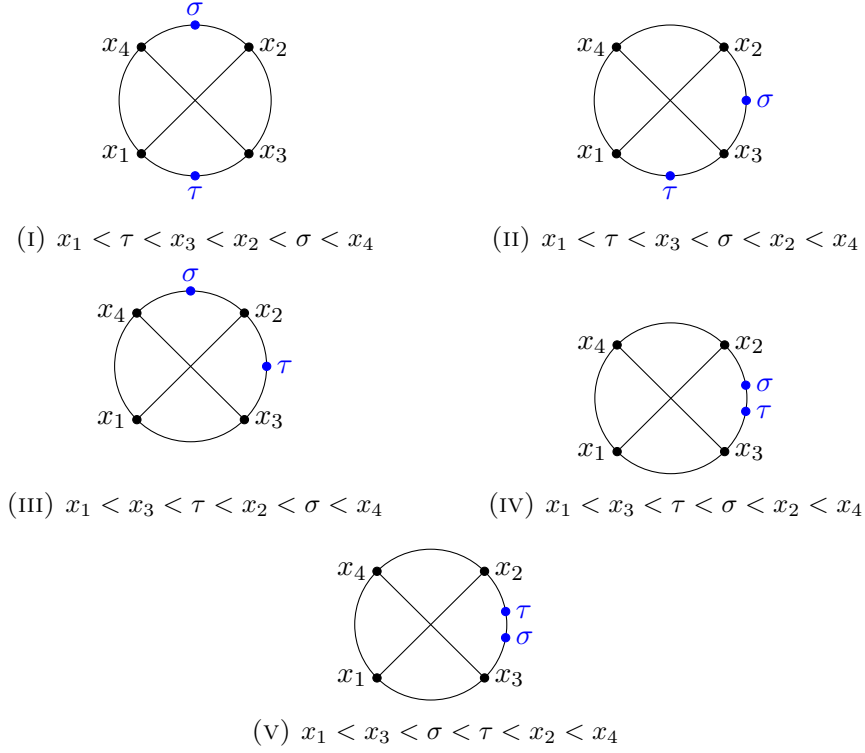
each of which is calculated as before by solving the equations of motion for the effective action \eqref{eq:mu_eff}.
The differences between these cases stems from the fact that the jumps of $u'_{cl}$ occur at different positions, and that eventually the expression $\exp(\mathcal S(u_{cl}, \sigma,\tau,t))$ has to be integrated over the correct domain. 
We give explicit results for each of the configurations  in Appendix \ref{app:OTO}.







\pagebreak
\appendix
\section{The Maurer-Cartan element of \texorpdfstring{$\vir$}{vir} and the KKS symplectic form on Virasoro orbits}\label{app:proofs_schwarzian}
Throughout this section we consider an orbit $\OO_{b}$ with representative \mbox{$b = b(x) dx^2$}.
We denote by 
\begin{equation*}
  b^f(x) = b(f(x)) f'^2 + \frac12 S(f)
\end{equation*}
 a generic point of $\OO_{b}$.

\begin{proposition}\label{prop:1}
  The one-form 
  \begin{equation}
    \hat Y = \frac{\delta f}{f'} + \delta k + \frac{c}{24}\int_0^{2\pi} dx \frac{\delta f}{f'} \left( \frac{f''}{f'} \right)'
  \end{equation}
  solves the equation 
  \begin{equation}
    \delta (b^f(x) - c) = ad^*(\hat Y) (b^f(x) - c).
  \end{equation}
\end{proposition}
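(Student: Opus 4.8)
The plan is to prove the identity by direct computation; its essential content is the classical variational (cocycle) formula for the Schwarzian derivative. First I would note that in the coadjoint representation of $\vir$ the central generator acts trivially, so only the $\vect(S^1)$-component of $\hat Y$, namely $\epsilon := \delta f/f'$, enters $ad^*(\hat Y)$; the summands $\delta k$ and $\frac{c}{24}\int \frac{\delta f}{f'}(f''/f')'$ are present only so that the companion Maurer-Cartan identity in \eqref{eq:MC_Y} holds, and are invisible here. Using \eqref{eq:inf_coadj} with $L=\epsilon$ and $b=b^f$, the right-hand side becomes
\[
ad^*(\hat Y)(b^f-c)=\Bigl(2\epsilon' b^f+\epsilon (b^f)'+\tfrac{c}{12}\epsilon'''\Bigr)\,dx^2 ,
\]
while the left-hand side is $\delta b^f\,dx^2$ (the central component of $\delta(b^f-c)$ vanishes, matching the right-hand side). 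So it remains to verify $\delta b^f=2\epsilon' b^f+\epsilon (b^f)'+\tfrac{c}{12}\epsilon'''$, with $b^f=b(f(x))f'^2+\tfrac{c}{12}S(f)$ as in \eqref{eq:coad}.

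I would split $b^f=B+\tfrac{c}{12}S(f)$ with $B:=b(f(x))f'^2$ and handle the two pieces separately. For $B$, a short calculation from $\delta\bigl(b(f)\bigr)=b'(f)\,\delta f$, $\delta(f'^2)=2f'(\delta f)'$ and $\epsilon'=(\delta f)'/f'-(\delta f)f''/f'^2$ yields $\delta B=2\epsilon' B+\epsilon B'$ with no anomalous term; this is just tensoriality of a quadratic differential. For the Schwarzian part I would introduce $v:=f''/f'$, so $S(f)=v'-\tfrac12 v^2$, and first establish the auxiliary relation $\delta v=\epsilon''+(v\epsilon)'$ by expanding $(\delta f)'$ and $(\delta f)''$ in terms of $\epsilon$ and $v$. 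Substituting into $\delta S(f)=(\delta v)'-v\,\delta v$ and collecting terms then gives the Schwarzian cocycle identity
\[
\delta S(f)=\epsilon'''+2\epsilon' S(f)+\epsilon\,S(f)' .
\]

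Adding the two contributions,
\[
\delta b^f=2\epsilon'\bigl(B+\tfrac{c}{12}S(f)\bigr)+\epsilon\bigl(B+\tfrac{c}{12}S(f)\bigr)'+\tfrac{c}{12}\epsilon'''=2\epsilon' b^f+\epsilon (b^f)'+\tfrac{c}{12}\epsilon''' ,
\]
which is exactly $ad^*(\hat Y)(b^f-c)$, completing the proof; this also re-derives the second line of \eqref{eq:MC_Y}. I do not anticipate a genuine obstacle: the only care needed is bookkeeping — separating the terms of $\hat Y$ relevant to this identity (the $\vect(S^1)$-part) from those relevant to the structure identity $\delta\hat Y+\tfrac12[\hat Y,\hat Y]=0$ — and carrying out the Schwarzian variation without sign slips. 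Routing the latter through $v=f''/f'$ and the identity $\delta v=\epsilon''+(v\epsilon)'$ is what makes the cancellations transparent.
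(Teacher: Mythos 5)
Your proposal is correct and follows essentially the same route as the paper: the central part acts trivially, the $b(f(x))f'^2$ piece transforms tensorially, and everything reduces to the Schwarzian cocycle identity $\delta S(f)=\epsilon'''+2\epsilon' S(f)+\epsilon\,S(f)'$, which the paper proves as Lemma \ref{lem:1} by the same kind of direct expansion, also working with $S(f)=\left(\frac{f''}{f'}\right)'-\tfrac12\left(\frac{f''}{f'}\right)^2$. The only organizational difference is that you package the intermediate computation as the auxiliary identity $\delta v=\epsilon''+(v\epsilon)'$ with $v=f''/f'$, whereas the paper expands $\delta S(f)$ and $Y'''$ term by term and compares.
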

\begin{proof} 
  Let $Y = \delta f / f'$.
  We recall that for any $L + k \in \vir$ the coadjoint action is given by
  \[
    ad^*(L + k) (b^f - c) = \Big( 2 L' b^f(x) + L (b^f(x))' + \frac{c}{12} L''' \Big) dx^2.
  \] 
  In particular the central terms act (and are acted up on) trivially and hence   
  \[
    ad^*(\hat Y) (b^f - c) =  ad^*(Y)(b^f).
  \] 
  From 
  \begin{equation}
    \begin{split}
    Y'&= \frac{\delta f'}{f'} - \frac{f''}{f'} \frac{\delta f}{f'} \\
    Y''&= \frac{\delta f''}{f'} - 2 \frac{f''}{f'} \frac{\delta f'}{f'} - \left( \left( \frac{f''}{f'} \right)' - \left( \frac{f''}{f'} \right)^2 \right) \frac{\delta f}{f'} 
  \end{split}
  \label{eq:Y}
  \end{equation}
  it follows that 
  \begin{align*}
    2Y' b^f((x)) &= \left( \frac{\delta f'}{f'} - \frac{f''}{f'}\frac{\delta f}{f'} \right)\left( 2b(f(x)) f'^2 + \frac{c}{6} S(f)\right) \\
    &= 2b(f(x))f'\delta f' - 2bf''\delta f + 2\cdot \frac{c}{12} S(f) Y'
  \end{align*}
  and
  \begin{align*}
    Y (b^f(x))' &= \frac{\delta f}{f'} \left(b(f(x)) f'^2 + \frac{c}{12}S(f) \right)' \\
    &= b'(f(x)) f'^2 \delta f + 2 b(f(x)) f'' \delta f + \frac{c}{12} S(f)' \frac{\delta f}{f'}.
  \end{align*}
  Therefore,
  \begin{align*}
    ad^*(Y) b^f &= 2Y' b^f(x) + Y (b^f(x))' + \frac{c}{12} Y'''  \\
    &=2b(f(x)) f' \delta f' + b'(f(x))f'^2 \delta f + \frac{c}{12} \left( 2 Y' S(f) + Y S(f)' + Y''' \right) \\
  \end{align*}
  The proof is completed by the following 
  \begin{lemma}\label{lem:1}
    $Y$ satisfies the following identity
    \begin{equation}
      \delta S(f) = 2S(f)Y' + YS(f)' + Y'''.
    \end{equation}
  \end{lemma}~\bigskip
\end{proof}

\begin{proof}[Proof of Lemma \ref{lem:1}]
  It will be convenient to rewrite the Schwarzian derivative of $f$ as follows:
  \begin{equation*}
    S(f) = \frac{f'''}{f'} - \frac32 \left( \frac{f''}{f'} \right)^2 = \left( \frac{f''}{f'} \right)' - \frac12 \left( \frac{f''}{f'} \right)^2.
  \end{equation*}
  For later use, we remark that 
  \begin{equation*}
    S(f)' = \left( \frac{f''}{f'} \right)'' - \left( \frac{f''}{f'} \right)' \left( \frac{f''}{f'} \right).
  \end{equation*}
  Then 
  \begin{equation}
    \begin{split}
      \delta S(f) &= \delta \left[ \left( \frac{f''}{f'} \right)' - \frac12 \left( \frac{f''}{f'} \right)^2 \right] \\
      &= \left( \frac{\delta f''}{f'} - \frac{f''\delta f'}{f'^2} \right)' - \left( \frac{f''}{f'} \right)\left( \frac{\delta f''}{f'} - \frac{f'' \delta f'}{f'^2} \right) \\
      &= \left( \frac{\delta f''}{f'} \right)' - \left( \frac{f''}{f'} \right)'\frac{\delta f'}{f'} - \frac{f''}{f'}\left( \frac{\delta f'}{f'} \right)' - \frac{f''}{f'}\frac{\delta f''}{f'} + \left( \frac{f''}{f'} \right)^2 \frac{\delta f'}{f'} \\
      &= \frac{\delta f'''}{f'} - 3 \frac{f''}{f'} \frac{\delta f''}{f'} - \left( S(f) - \frac32 \left( \frac{f''}{f'} \right)^2 \right)\frac{\delta f'}{f'}
    \end{split}
    \label{eq:delta_Sch}
  \end{equation}
  Now,  
  \begin{align*}
    Y'''&= \frac{\delta f'''}{f'} - 3 \frac{f''}{f'}\frac{\delta f''}{f'} -  3\left( \left( \frac{f''}{f'} \right)' - \left( \frac{f''}{f'}  \right)^2 \right)\frac{\delta f'}{f'} \\
    &+ \left( 3 \left( \frac{f''}{f'} \right)' \frac{f''}{f'} - \left( \frac{f''}{f'} \right)'' - \left( \frac{f''}{f'}  \right)^3\right) \frac{\delta f}{f'} \\
    &= \frac{\delta f'''}{f'} - 3 \frac{f''}{f'}\frac{\delta f''}{f'} -  3\left( S(f) - \frac12\left( \frac{f''}{f'}  \right)^2 \right)\frac{\delta f'}{f'} \\
    &+ \left(  2 \left( \frac{f''}{f'} \right)' \frac{f''}{f'} - S(f)' - \left( \frac{f''}{f'} \right)^3 \right) \frac{\delta f}{f'} \\
  \end{align*}
  which we can rewrite with the help of \eqref{eq:delta_Sch} and \eqref{eq:Y} as
  \begin{equation*}
    Y''' = \delta S(f) - 2S(f)Y' - S(f)'Y.
  \end{equation*}
\end{proof}

\begin{proposition}
  $\hat Y$ satisfies the Maurer-Cartan equation
  \begin{equation}
    \delta \hat Y - \frac12 [\hat Y,\hat Y] = 0.
  \end{equation}
\end{proposition}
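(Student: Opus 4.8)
The plan is to split everything along the decomposition $\vir = \vect(S^1)\oplus\RR\,\mathbf 1$ and to check the two components of the equation separately. Write $\hat Y = Y + (\delta k + \gamma)\,\mathbf 1$, where $Y = \delta f/f'$ is the $\vect(S^1)$-part, $\mathbf 1$ is the central generator, and $\gamma = \tfrac{c}{24}\int_0^{2\pi}\tfrac{\delta f}{f'}\bigl(\tfrac{f''}{f'}\bigr)'\,dx$ is a scalar $1$-form. Since $\delta(\delta k)=0$ and central elements act trivially under $\ad$, the Maurer–Cartan equation $\delta\hat Y - \tfrac12[\hat Y,\hat Y]=0$ is equivalent to the pair of scalar identities
\[
\text{(i)}\quad \delta Y - \tfrac12[Y,Y]_{\vect}=0, \qquad \text{(ii)}\quad \delta\gamma = \tfrac12[\hat Y,\hat Y]_{\mathrm{central}},
\]
and I would establish them in that order.

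For (i), I would compute directly. From $\delta(\delta f)=0$ and $\delta f = f'Y$ one gets $\delta Y = -\tfrac{(\delta f)'}{f'^2}\wedge\delta f = -Y'\wedge Y$ (using $Y\wedge Y=0$), while with the bracket normalized as in \eqref{eq:virasoro_algebra}, namely $[L_1\del_x,L_2\del_x]=(L_1L_2'-L_1'L_2)\del_x$, the $2$-form $\tfrac12[Y,Y]$ is likewise $-Y'\wedge Y$. Hence (i) holds; this is just the structure equation of the Maurer–Cartan form of $\widetilde\Diff^+(S^1)$ pulled back along the orbit map, and the signs match the form of the equation stated here.

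The real content is (ii). Set $\mu = f''/f'$, so $\gamma = \tfrac{c}{24}\int_0^{2\pi}Y\mu'\,dx$. Differentiating and using $\delta(\delta f)=0$ gives $\delta\mu = Y'' + (\mu Y)'$, hence $\delta(\mu') = Y''' + (\mu Y)''$; combining this with $\delta Y = -Y'\wedge Y$ and the Leibniz rule $\delta(Y\mu') = (\delta Y)\,\mu' - Y\wedge\delta(\mu')$ produces
\[
\delta\gamma = \tfrac{c}{24}\int_0^{2\pi}\Bigl( Y'''\wedge Y \;-\; \mu'\,(Y'\wedge Y) \;-\; Y\wedge(\mu Y)'' \Bigr)\,dx .
\]
On the other side, reading off the central component from \eqref{eq:adj} (whose cocycle, after integration by parts on the circle, is antisymmetric in its two periodic arguments), one finds that $\tfrac12[\hat Y,\hat Y]_{\mathrm{central}}$ is exactly the $2$-form $\tfrac{c}{24}\int_0^{2\pi}Y'''\wedge Y\,dx$ — the coefficient in front of the integral term in $\hat Y$ being chosen precisely so that this matches. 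Thus (ii) reduces to showing that the $2$-form $\int_0^{2\pi}\bigl(\mu'\,(Y'\wedge Y) + Y\wedge(\mu Y)''\bigr)\,dx$ vanishes, i.e.\ that for all $2\pi$-periodic functions $a,b$ (the values of $Y$ on two tangent vectors) one has $\int_0^{2\pi}\bigl(\mu'(a'b-b'a) + a(\mu b)'' - b(\mu a)''\bigr)\,dx = 0$. Two integrations by parts turn $\int(a(\mu b)'' - b(\mu a)'')\,dx$ into $\int\mu(a''b - b''a)\,dx$, and one integration by parts turns $\int\mu'(a'b-b'a)\,dx$ into $-\int\mu(a''b-b''a)\,dx$; the two cancel, so (ii) follows.

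I expect the only genuine obstacle to be bookkeeping: keeping the Koszul signs straight when applying $\delta$ to products of field-space forms, remembering that writing $\tfrac12[\hat Y,\hat Y]$ already encodes the antisymmetrization that promotes $[\hat Y(v),\hat Y(w)]$ to a $2$-form, and tracking the integration-by-parts signs on $S^1$ (every boundary term vanishes because $Y=\delta f/f'$ is honestly $2\pi$-periodic, $\delta f$ and $f'$ both being periodic). Once these are handled, the imposing-looking terms collapse to the elementary identity above and the equation follows.
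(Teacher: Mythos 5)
Your proof is correct, and its skeleton is the same as the paper's: split $\hat Y$ along $\vir=\vect(S^1)\oplus\RR$, check the vector-field component via $\delta Y=Y\wedge Y'=\tfrac12[Y,Y]_{\vect(S^1)}$, and verify the central component by integration by parts on the circle, with periodicity of $\delta f$, $f'$, $f''$ killing all boundary terms. Where you genuinely diverge is in the key identity used for the central part: the paper rewrites $(f''/f')'$ as $S(f)+\tfrac12(f''/f')^2$ and then invokes Lemma \ref{lem:1}, $\delta S(f)=2S(f)Y'+S(f)'Y+Y'''$, so that the computation runs through the Schwarzian; you instead differentiate $\mu=f''/f'$ directly, obtaining the simpler identity $\delta\mu=Y''+(\mu Y)'$, and reduce the whole check to the elementary antisymmetric statement $\int_{S^1}\bigl(\mu'(a'b-b'a)+a(\mu b)''-b(\mu a)''\bigr)dx=0$ for periodic $a,b$, which follows from two integrations by parts. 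Your route is more self-contained and elementary (no Schwarzian calculus needed), while the paper's choice reuses Lemma \ref{lem:1}, which it needs anyway for Proposition \ref{prop:1}, and keeps the link between the central term of $\hat Y$ and the Schwarzian cocycle explicit. Two harmless remarks: the normalization of the central term ($1/24$ in the main text versus $c/24$ in the appendix, correlated with $1/12$ versus $c/12$ in the cocycle of \eqref{eq:adj}) is inconsistent in the paper itself, and your computation is internally consistent with the $c/24$ convention; and your identification $\tfrac12[\hat Y,\hat Y]_{\mathrm{central}}=\tfrac{c}{24}\int Y'''\wedge Y\,dx$ indeed requires the antisymmetrized form of the Gelfand--Fuchs cocycle, which you correctly justify by integration by parts.
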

\begin{proof}
  The proof goes by direct calculation. 
  To get the signs right, we recall that $\delta$ is interpreted as the exterior derivative on $\Diff^+(S^1)$ (resp.\ its central extension) and hence anti-commutes with the exterior derivative $d$ on $S^1$.
  Then one has 
  \begin{align*}
    \delta \hat Y &= \delta Y - \frac{1}{12}\int_0^{2\pi}dx\ \left( \delta Y \left( \frac{f''}{f'} \right)' - Y \delta \left( \frac{f''}{f'} \right)' \right) \\
    &= \delta Y - \frac{1}{24} \int_0^{2\pi}dx\ \left( \delta Y \left( \frac{f''}{f'} \right)' - Y \delta \left( S(f) + \frac12 \left( \frac{f''}{f'}  \right)^2\right) \right) \\
    &= \delta Y - \frac{1}{24} \int_0^{2\pi}dx\ \left( \delta Y \left( \frac{f''}{f'} \right)' - Y \delta S(f) - Y \left( \frac{f''}{f'} \right) \delta \left( \frac{f''}{f'}  \right). \right)
  \end{align*}
  Let us focus on the integral.
  By Lemma \ref{lem:1} and the fact that 
  \[
    \delta Y = \frac12[Y,Y]_{\vect(S^1)} = Y Y'
  \]
  one has 
  \begin{align*}
    &\int_0^{2\pi}dx\ \left( \delta Y \left( \frac{f''}{f'} \right)' - Y \delta S(f) - Y \left( \frac{f''}{f'} \right) \delta \left( \frac{f''}{f'}  \right) \right) =  \\
    =\;&\int_0^{2\pi}dx\ \left( \delta Y \left( \frac{f''}{f'} \right)' - Y \left( 2S(f)Y' + S(f')Y + Y''' \right) - Y \left( \frac{f''}{f'} \right) \delta \left( \frac{f''}{f'}  \right) \right)\\
    =\;&\int_0^{2\pi}dx\ \left( YY' \left( \left( \frac{f''}{f'} \right)' - 2S(f) \right) - YY'''- Y \left( \frac{f''}{f'} \right) \delta \left( \frac{f''}{f'}  \right) \right)\\
    =\;&\int_0^{2\pi}dx\ \left( YY' \left( \left( \frac{f''}{f'} \right)^2 - \left( \frac{f''}{f'} \right)' \right) - Y \left( \frac{f''}{f'} \right) \delta \left( \frac{f''}{f'}  \right) - YY'''\right).
  \end{align*}
  Notice that 
  \[
    YY' = \frac{\delta f \wedge \delta f'}{f'^2}.
  \] 
  A partial integration yields 
  \begin{align*}
    -\int_0^{2\pi}dx\ \frac{\delta f \wedge \delta f'}{f'^2} \left( \frac{f''}{f'} \right)' &=  \int_0^{2\pi}dx\ \left( \frac{\delta f \wedge \delta f'}{f'^2} \right)' \left( \frac{f''}{f'} \right) \\
    &= \int_0^{2\pi}dx\ \left( \frac{\delta f \wedge \delta f''}{f'^2} \left( \frac{f''}{f'} \right) - 2 \frac{\delta f \wedge \delta f'}{f'^2} \left( \frac{f''}{f'} \right)^2 \right)
  \end{align*}
  such that  
  \[
    \int_0^{2\pi}dx\ YY'\left( \left( \frac{f''}{f'}  \right)^2 - \left( \frac{f''}{f'} \right)'\right) = \int_0^{2\pi}dx\ \left(  \frac{\delta f \wedge \delta f''}{f'^2}\left( \frac{f''}{f'} \right)^2 - \frac{\delta f \wedge \delta f'}{f'^2} \left( \frac{f''}{f'} \right)^2 \right).
  \]
  Using 
  \[
    \delta \left( \frac{f''}{f'} \right) = \frac{\delta f''}{f'} - \frac{f''}{f'} \frac{\delta f'}{f'},
  \] 
  one then has 
  \[
    \int_0^{2\pi}dx\ \left( YY' \left( \left( \frac{f''}{f'} \right)^2 - \left( \frac{f''}{f'} \right)' \right) - Y \frac{f''}{f'} \delta \left( \frac{f''}{f'} \right) \right) = 0.
  \] 
  Putting all together, it follows that 
  \begin{equation*}
    \delta \hat Y = \delta Y - \frac{1}{24} \int_0^{2\pi}dx\ YY''' = \delta Y + \frac{1}{24} \int_{0}^{2\pi}dx\ Y'''Y = \frac12[\hat Y, \hat Y]
  \end{equation*}
  by definition of the bracket, {\em c.f.}\ \eqref{eq:adj}.

\end{proof}

\begin{proposition}
  The symplectic form $\omega$ on the Virasoro coadjoint orbit $\OO_{b}$ can be written as 
  \begin{equation}
    \begin{split}
      \omega &= \frac12 \int_{S^1}\left( \delta b^f \wedge \frac{\delta f}{f'} \right)dx \\
      &= \int_{S^1}\left( b(f(x)) \delta f' \wedge \delta f + \frac{c}{24} \delta \log(f') \wedge \delta \log(f')' \right)dx.
    \end{split}
  \end{equation}
\end{proposition}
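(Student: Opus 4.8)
The plan is to reduce the abstract Kirillov--Kostant--Souriau form to an identity of integrals over $S^1$. For the first equality I would start from the reformulation $\omega_g=\tfrac12\langle\delta\xi^g,Y(g)\rangle$ recorded in the Remark following \eqref{eq:KKS_general} (it follows from the Maurer--Cartan identities \eqref{eq:MC_Y}). Here $\xi^g=b^f-c$ with $c$ fixed along the orbit, so $\delta\xi^g=\delta b^f$ has vanishing central component. Hence in the pairing $\langle\delta b^f,\hat Y\rangle$ only the $\vect(S^1)$-part $\delta f/f'$ of $\hat Y$ survives, the central terms $\delta k$ and the integral term of $\hat Y$ pairing trivially against $\delta b^f$; this gives $\omega=\tfrac12\int_{S^1}\bigl(\delta b^f\wedge\tfrac{\delta f}{f'}\bigr)\,dx$.

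For the second equality I would insert $b^f=b(f(x))f'^2+\tfrac{c}{12}S(f)$ and split the integrand. The term $b(f)f'^2$ gives $\tfrac12\int_{S^1}\bigl[b'(f)f'^2\,\delta f+2b(f)f'\,\delta f'\bigr]\wedge\tfrac{\delta f}{f'}\,dx=\int_{S^1}b(f(x))\,\delta f'\wedge\delta f\,dx$, the $\delta f\wedge\delta f$ piece vanishing. It then remains to prove $\int_{S^1}\delta S(f)\wedge Y\,dx=\int_{S^1}\delta\log f'\wedge(\delta\log f')'\,dx$ with $Y=\delta f/f'$. The two inputs for this are Lemma~\ref{lem:1}, i.e.\ $\delta S(f)=2S(f)Y'+S(f)'Y+Y'''$, and the relation $\delta\log f'=Y'+(f''/f')Y$ read off from \eqref{eq:Y}.

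Substituting Lemma~\ref{lem:1} and discarding all terms proportional to $Y\wedge Y$ leaves $\int_{S^1}\bigl(Y'''\wedge Y+2S(f)\,Y'\wedge Y\bigr)\,dx$, and one integration by parts turns $Y'''\wedge Y$ into $-Y''\wedge Y'$. On the other side, expanding $\delta\log f'=Y'+(f''/f')Y$, using $S(f)=(f''/f')'-\tfrac12(f''/f')^2$, and performing one further integration by parts to move the derivative off $f''/f'$, one reduces $\int_{S^1}\delta\log f'\wedge(\delta\log f')'\,dx$ to the same $\int_{S^1}\bigl(-Y''\wedge Y'+2S(f)\,Y'\wedge Y\bigr)\,dx$, which finishes the proof. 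I expect the only real difficulty to be sign bookkeeping: the $\delta(\cdot)$ are Grassmann-odd, $\delta$ anticommutes with $\partial_x$ on form coefficients, and each integration by parts on $S^1$ rests on $\partial_x(A\wedge B)=A'\wedge B+A\wedge B'$ with the orderings kept straight; there are no analytic subtleties since all quantities are smooth and $2\pi$-periodic.
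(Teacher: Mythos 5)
Your proposal is correct, and in essence it is the same direct variational computation as the paper's, but organized differently enough to be worth comparing. The paper's appendix proof only addresses the second equality: it expands $\delta S(f)\wedge\frac{\delta f}{f'}$ straight from the explicit formula \eqref{eq:delta_Sch}, integrates by parts down to $\int \frac{\delta f'\wedge\delta f''}{f'^2}\,dx$, and must then check that an explicit boundary term vanishes by quasi-periodicity of $f$; the first equality is inherited from the main text (the Remark after \eqref{eq:KKS_general} together with Proposition \ref{prop:1}). You instead make the first equality explicit — only the $\vect(S^1)$-component of $\hat Y$ pairs nontrivially with $\delta b^f$ since $\delta\xi^g$ has no central part — and prove the second by invoking the covariant identity of Lemma \ref{lem:1} together with $\delta\log f' = Y' + (f''/f')\,Y$ read off from \eqref{eq:Y}, bringing both sides to the common expression $\int\bigl(-Y''\wedge Y' + 2S(f)\,Y'\wedge Y\bigr)dx$. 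I verified this meeting point: with $h=f''/f'$ one gets $\int\delta\log f'\wedge(\delta\log f')'\,dx=\int\bigl(Y'\wedge Y''+(2h'-h^2)\,Y'\wedge Y\bigr)dx$ and $2h'-h^2=2S(f)$, while $\int\delta S(f)\wedge Y\,dx=\int\bigl(Y'''\wedge Y+2S(f)\,Y'\wedge Y\bigr)dx$ after dropping the $Y\wedge Y$ term, so one integration by parts on each side closes the argument. What your organization buys is that every integration by parts involves manifestly $2\pi$-periodic quantities under the circle integral, so no boundary bookkeeping is needed; what the paper's version buys is independence from the statement of Lemma \ref{lem:1} (it only uses the intermediate formula \eqref{eq:delta_Sch}), though since Lemma \ref{lem:1} is proved independently of this Proposition there is no circularity in your route. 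The treatment of the $b(f)f'^2$ term is identical in both.
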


\begin{proof}
  From \eqref{eq:delta_Sch} it then follows that 
  \begin{align*}
    &\int \left( \delta S(f) \wedge \frac{\delta f}{f'}  \right)dx \\
    =\;& \int \left[ \left( \frac{\delta f}{f'} \right)' \frac{\delta f''}{f'}  + \left( \frac{f''}{f'} \right)' \frac{\delta f \wedge \delta f'}{f'^2} - \left( \frac{f''}{f'} \frac{\delta f}{f'} \right)'\wedge \left( \frac{\delta f'}{f'} \right) \right.\\
    &\quad\!\quad\left. + \frac{f''}{f'} \frac{\delta f \wedge f'}{f'^2} - \left( \frac{f''}{f'} \right)^2\frac{\delta f \wedge \delta f'}{f'^2} \right]dx  + bdry\\
    =\;&  \int \left( \frac{\delta f' \wedge \delta f''}{f'^2} \right)dx + bdry \\
    =\;&  \int \left( \delta \log f' \wedge \delta (\log f')'  \right)dx + bdry
  \end{align*}
  where the first and third term in the second line arise from an integration by parts.
  The boundary term is given by 
  \[
    bdry = - \left.\frac{\delta f \wedge \delta f''}{f'^2} + \frac{f''}{f'} \frac{\delta f \wedge \delta f'}{f'^2} \right\rvert_{0}^{2\pi}.
  \]
  Recall that $f$ is quasi-periodic, $f(x+2\pi) = f(x) + 2\pi$. 
  Hence $f',f'', \delta f, \delta f'$ are $2\pi$-periodic and the boundary term vanishes. 
  From  
  \[
    \delta \left( b(f(x))f'^2\right) \wedge \frac{\delta f}{f'}  = (2b(f(x))f' \delta f' + b'(f(x))f'^2 \delta f) \wedge \frac{\delta f}{f'}= 2b(f(x)) \delta f' \wedge \delta f
  \]
  it then follows that 
  \begin{align*}
    \omega &= \frac{1}{2} \int \left( \delta b^f(x) \wedge \frac{\delta f}{f'} \right)dx \\
    &= \frac{1}{2} \int \left( \delta \left( b(f(x)) f'^2 + \frac{c}{12} S(f) \right) \wedge \frac{\delta f}{f'} \right) dx \\
    &= \int \left( b(f(x)) \delta f' \wedge \delta f + \frac{c}{24} \delta \log f' \wedge \delta (\log f')' \right)dx
  \end{align*}
\end{proof} 

\section{Explicit expression of out-of-time-ordered correlation functions}\label{app:OTO}
Below, we present explicit expressions for the out-of-time-ordered correlation function $\left\langle \OO(x_1,x_2)\OO(x_3,x_4) \right\rangle$, with $x_1 < x_3 < x_2 < x_4$.
We give the analytic result case by case, {\em c.f.}\ Figure \ref{fig:possible_OTO_configuration}.\bigskip

\paragraph{\textbf{Case (I)}}
This scenario is analogous to the time-ordered-correlator but with $x_2$ and $x_3$ interchanged, since neither $\tau$ nor $\sigma$ are positioned in the overlap of the intervals.
Hence,
\begin{equation*}
  \left\langle \OO(x_1,x_2) \OO(x_3,x_4) \right\rangle_{(I)} = \frac{\pi}{2} e^{-2\pi b_0 t + \frac{3}{2ct}(x_{42} + x_{31})} \sqrt{\frac{c t}{12}} \cdot \Big[  E(x_{31},x_{42}) + G(x_{31},x_{42}) \Big]
\end{equation*}
where we recall that
\begin{equation*}
  E(x,y) = 4 \sqrt{\frac{ct}{12}} \sum_{\varepsilon_1,\varepsilon_2 = \pm 1} \varepsilon_1 \varepsilon_2 \exp\left( -\frac{3}{4\pi ct} \left(\varepsilon_1 x + \varepsilon_2 y - \frac{c\pi t}{3} \alpha\right)^2  \right)
\end{equation*}
and
\begin{equation*}
  G(x,y) = \hspace{-1em}\sum_{\varepsilon_1, \varepsilon_2 = \pm 1}\hspace{-.5em} \varepsilon_1 \varepsilon_2 \erf\left( \sqrt{-2\pi b_0 t} - (\varepsilon_1 x + \varepsilon_2 y)\sqrt{\frac{3}{4\pi c t}} \right) \left( \frac{\pi ct}{3} \alpha - \varepsilon_1 x - \varepsilon_2 y \right)
\end{equation*}

\paragraph{\textbf{Case (II)}}
\begin{align*}
  &\left\langle \OO(x_1,x_2) \OO(x_3,x_4) \right\rangle_{(II)} = 2\pi e^{-2\pi b_o t} \left( \frac{ct}{12} \right)^{3/2} \Bigg[ \\
    & e^{ \frac{3}{2ct}\left( x_{43} + x_{21} - 2x_{23} \right)} \erf\left( \sqrt{-2\pi b_0 t} + (x_{43} - x_{21}) \sqrt{\frac{3}{4\pi c t}} \right)\\
    &-e^{ \frac{3}{2ct}\left( x_{43} + x_{21} - 2x_{23} \right)} \erf\left( \sqrt{-2\pi b_0 t} + (x_{43} + x_{21} - 2x_{23}) \sqrt{\frac{3}{4\pi c t}} \right) \\
    &-e^{ \frac{3}{2ct}\left(  x_{43} + x_{21} + 2 x_{23} \right)} \erf\left( \sqrt{-2\pi b_0 t} +  (x_{43} - x_{21}  + 2x_{23}) \sqrt{\frac{3}{4\pi c t}} \right) \\
    &+e^{ \frac{3}{2ct}\left( x_{43} + x_{21} + 2 x_{23} \right)} \erf\left( \sqrt{-2\pi b_0 t} + (x_{21} + x_{43}) \sqrt{\frac{3}{4\pi c t}} \right)\\
    &+e^{ -\frac{3}{2ct}\left( \frac{2\pi}{3} ct \alpha - 2\pi + x_{43} - 3 x_{21} + 2x_{23} \right)} \erf\left( \sqrt{-2\pi b_0 t} + \left( x_{43} - x_{21} + 2x_{23} - 2\pi \right)\sqrt{\frac{3}{4\pi ct}} \right)\\
    &-e^{ -\frac{3}{2ct}\left( \frac{2\pi}{3} ct \alpha - 2\pi + x_{43} - 3 x_{21} + 2x_{23} \right)} \erf\left( \sqrt{-2\pi b_0 t} + (x_{43} - x_{21} -2\pi) \sqrt{\frac{3}{4\pi c t}} \right) \\
    &-e^{ -\frac{3}{2ct}\left( \frac{2\pi}{3} ct \alpha - 2\pi + x_{43} + x_{21} - 2 x_{23} \right)} \erf\left( \sqrt{-2\pi b_0 t} + (x_{43} + x_{21} - 2\pi) \sqrt{\frac{3}{4\pi ct}} \right) \\
  &+e^{ -\frac{3}{2ct}\left(  \frac{2\pi}{3} ct \alpha - 2\pi + x_{43} + x_{21} - 2 x_{23} \right)} \erf\left( \sqrt{-2\pi b_0 t} + (x_{43} + x_{21} - 2x_{23} - 2\pi)\sqrt{\frac{3}{4\pi c t}} \right) \Bigg]
\end{align*}
\vfill

\paragraph{\textbf{Case (III)}} 
\begin{align*}
  &\left\langle \OO(x_1,x_2) \OO(x_3,x_4) \right\rangle_{(III)} = 2\pi e^{-2\pi b_o t} \left( \frac{ct}{12} \right)^{3/2} \Bigg[ \\
    & e^{ \frac{3}{2ct}\left( x_{43} + x_{21} - 2x_{23} \right)} \erf\left( \sqrt{-2\pi b_0 t} - (x_{43} + x_{21} - 2 x_{23}) \sqrt{\frac{3}{4\pi ct}} \right) \\
    &-e^{ \frac{3}{2ct}\left( x_{43} + x_{21} - 2x_{23} \right)} \erf\left( \sqrt{-2\pi b_0 t} + (x_{43} - x_{21})\sqrt{\frac{3}{4\pi c t}} \right) \\
    &-e^{ \frac{3}{2ct}\left( x_{43} + x_{21} + 2x_{23} \right)} \erf\left( \sqrt{-2\pi b_0 t} - (x_{43} + x_{21})\sqrt{\frac{3}{4\pi c t}} \right) \\
    &+e^{  \frac{3}{2ct}\left( x_{43} + x_{21} + 2x_{23} \right)} \erf\left( \sqrt{-2\pi b_0 t} + (x_{43} - x_{21} - 2x_{23})\sqrt{\frac{3}{4\pi c t}} \right) \\
    &-e^{ \frac{3}{2ct}\left( \frac{2\pi}{3} ct \alpha + 2\pi - x_{43} - x_{21} + x_{23} \right)} \erf\left( \sqrt{-2\pi b_0 t} - (x_{43} + x_{21} - 2 x_{23} - 2\pi)\sqrt{\frac{3}{4\pi c t}} \right) \\
    &+e^{ \frac{3}{2ct}\left( \frac{2\pi}{3} ct \alpha + 2\pi - x_{43} - x_{21} + x_{23} \right)} \erf\left( \sqrt{-2\pi b_0 t} - ( x_{43} + x_{21} - 2\pi)\sqrt{\frac{3}{4\pi c t}} \right) \\
    &+e^{ \frac{3}{2ct}\left( \frac{2\pi}{3} ct \alpha + 2\pi + 3 x_{43} - x_{21} - 2 x_{23} \right)} \erf\left( \sqrt{-2\pi b_0 t} + ( x_{43} - x_{21} + 2\pi)\sqrt{\frac{3}{4\pi c t}} \right) \\
  &-e^{ \frac{3}{2ct}\left( \frac{2\pi}{3} ct \alpha + 2\pi + 3 x_{43} - x_{21} - 2 x_{23} \right)} \erf\left( \sqrt{-2\pi b_0 t} + ( x_{43} - x_{21} - 2 x_{23} + 2\pi)\sqrt{\frac{3}{4\pi c t}} \right)\Bigg]
\end{align*}
\vfill

\paragraph{\textbf{Case (IV)}} 
\begin{align*}
  &\left\langle \OO(x_1,x_2) \OO(x_3,x_4) \right\rangle_{(IV)} = \pi e^{-2\pi b_o t} \left( \frac{ct}{12} \right)^{3/2} \Bigg[ \\
    & e^{ \frac{3}{2ct}\left( x_{43} + x_{21} + 2 x_{23} \right)} \erf\left( \sqrt{-2\pi b_0 t} + (x_{43} - x_{21} + 2 x_{23}) \sqrt{\frac{3}{4\pi ct}} \right) \\
    &-e^{ \frac{3}{2ct}\left( x_{43} + x_{21} + 2 x_{23} \right)} \erf\left( \sqrt{-2\pi b_0 t} + (x_{43} - x_{21} - 2 x_{23})\sqrt{\frac{3}{4\pi c t}} \right) \\
    &-e^{ -\frac{3}{2ct}\left( \frac{2\pi}{3} c t \alpha - 2\pi + x_{43} - 3x_{21} - 2x_{23} \right)} \erf\left( \sqrt{-2\pi b_0 t} + (x_{43} - x_{21} + 2 x_{23} - 2\pi)\sqrt{\frac{3}{4\pi c t}} \right) \\
    &+e^{ -\frac{3}{2ct}\left( \frac{2\pi}{3} c t \alpha - 2\pi + x_{43} - 3x_{21} - 2x_{23} \right)} \erf\left( \sqrt{-2\pi b_0 t} + (x_{43} - x_{21} - 2\pi)\sqrt{\frac{3}{4\pi c t}} \right) \\
    &-e^{ \frac{3}{2ct}\left( \frac{2\pi}{3} c t \alpha + 2\pi + 3 x_{43} - x_{21} - 2 x_{23} \right)} \erf\left( \sqrt{-2\pi b_0 t} + (x_{43} - x_{21} + 2\pi)\sqrt{\frac{3}{4\pi c t}} \right) \\
  &+e^{  \frac{3}{2ct}\left( \frac{2\pi}{3} c t \alpha + 2\pi + 3 x_{43} - x_{21} - 2 x_{23} \right)} \erf\left( \sqrt{-2\pi b_0 t} + (x_{43} - x_{21} - 2x_{23} + 2\pi)\sqrt{\frac{3}{4\pi c t}} \right)\Bigg]
\end{align*}

\paragraph{\textbf{Case (V)}} 
\begin{align*}
  &\left\langle \OO(x_1,x_2) \OO(x_3,x_4) \right\rangle_{(V)} = 2\pi e^{-2\pi b_o t} \left( \frac{ct}{12} \right)^{3/2} \Bigg[ \\
    & e^{ \frac{3}{2ct}\left( x_{43} + x_{21} + 2 x_{23} \right)} \erf\left( \sqrt{-2\pi b_0 t} + (x_{43} - x_{21} + 2x_{23}) \sqrt{\frac{3}{4\pi ct}} \right) \\
    &-e^{ \frac{3}{2ct}\left( x_{43} + x_{21} + 2 x_{23} \right)} \erf\left( \sqrt{-2\pi b_0 t} + (x_{43} - x_{21} - 2x_{23}))\sqrt{\frac{3}{4\pi c t}} \right) \\
    &-e^{ -\frac{4}{2ct}\left( \frac{2\pi}{3}ct\alpha - 2\pi + x_{43} - 3x_{21} + 2x_{23} \right)} \erf\left( \sqrt{-2\pi b_0 t} + (x_{43} - x_{21} + 2x_{23} - 2\pi)\sqrt{\frac{3}{4\pi c t}} \right) \\
    &+e^{ -\frac{4}{2ct}\left( \frac{2\pi}{3}ct\alpha - 2\pi + x_{43} - 3x_{21} + 2x_{23} \right)} \erf\left( \sqrt{-2\pi b_0 t} + (x_{43} - x_{21} - 2\pi)\sqrt{\frac{3}{4\pi c t}} \right) \\
    &-e^{ \frac{3}{2ct}\left( \frac{2\pi}{3} ct\alpha + 2\pi + 3 x_{43} - x_{21} - x_{23} \right)} \erf\left( \sqrt{-2\pi b_0 t} + ( x_{43} - x_{21} + 2\pi)\sqrt{\frac{3}{4\pi c t}} \right) \\
  &+e^{ \frac{3}{2ct}\left( \frac{2\pi}{3} ct\alpha + 2\pi + 3 x_{43} - x_{21} - x_{23} \right)} \erf\left( \sqrt{-2\pi b_0 t} + ( x_{43} - x_{21} - 2x_{23} + 2\pi)\sqrt{\frac{3}{4\pi c t}} \right)\Bigg]
\end{align*}

\end{document}